\newtheorem{lemma}{Lemma}[section]
\newtheorem{theorem}{Theorem}[section]
\newtheorem{remark}{Remark}[section]
\begin{document}
\begin{center}
\textbf{\LARGE{Sequences of Inequalities Among \\New Divergence Measures} }
\end{center}

\smallskip
\begin{center}
\textbf{\large{Inder Jeet Taneja}}\\
Departamento de Matem\'{a}tica\\
Universidade Federal de Santa Catarina\\
88.040-900 Florian\'{o}polis, SC, Brazil.\\
\textit{e-mail: ijtaneja@gmail.com\\
http://www.mtm.ufsc.br/$\sim $taneja}
\end{center}

\begin{abstract}
There are three classical divergence measures exist in the literature on information theory and statistics. These are namely, Jeffryes-Kullback-Leiber \cite{jef, kul} \textit{J-divergence}. Sibson-Burbea-Rao \cite{bur} \textit{Jensen-Shannon divegernce }and Taneja \cite{tan1} \textit{arithemtic-geometric mean divergence}.  These three measures bear an interesting relationship among each other and are based on logarithmic expressions. The divergence measures like \textit{Hellinger discrimination}, \textit{symmetric }$\chi ^2 - $\textit{divergence}, and \textit{triangular discrimination} are also known in the literature and are not based on logarithmic expressions. Past years Dragomir et al. \cite{dsb}, Kumar and Johnson \cite{kuj} and Jain and Srivastava \cite{jas} studied different kind of divergence measures. In this paper, we have presented some more new divergence measures and obtained inequalities relating these new measures made connections with previous ones. The idea of exponential divergence is also introduced.
\end{abstract}

\smallskip
\textbf{Key words:} \textit{J-divergence; Jensen-Shannon divergence; Arithmetic-Geometric divergence; Csisz\'{a}r's f-divergence; Information inequalities; Exponential divergence.}

\smallskip
\textbf{AMS Classification:} 94A17; 62B10.

\section{Introduction}

Let
\[
\Gamma _n = \left\{ {P = (p_1 ,p_2 ,...,p_n )\left| {p_i > 0,\sum\limits_{i
= 1}^n {p_i = 1} } \right.} \right\},
\quad
n \geqslant 2,
\]

\noindent
be the set of all complete finite discrete probability distributions. For
all $P,Q \in \Gamma _n $, below we shall consider two groups of
divergence measures.

\smallskip
\noindent
\textbf{$\bullet$ First Group}
\begin{align}
\label{eq1}
\Delta (P\vert \vert Q) & = \sum\limits_{i = 1}^n {\frac{(p_i - q_i )^2}{p_i +
q_i }} ,\\
\label{eq2}
h(P\vert \vert Q) & = \frac{1}{2}\sum\limits_{i = 1}^n {\left( {\sqrt {p_i } -
\sqrt {q_i } } \right)^2} ,\\
\label{eq3}
\Psi (P\vert \vert Q) & = \sum\limits_{i = 1}^n {\frac{(p_i - q_i )^2(p_i +
q_i )}{p_i q_i }} ,\\
\label{eq4}
K_0 (P\vert \vert Q) & = \sum\limits_{i = 1}^n {\frac{(p_i - q_i )^2}{\sqrt
{p_i q_i } }}
\intertext{and}
\label{eq5}
F(P\vert \vert Q) & = \frac{1}{2}\sum\limits_{i = 1}^n {\frac{(p_i^2 - q_i^2
)^2}{\sqrt {(p_i q_i )^3} }} ,
\end{align}

\noindent
\textbf{$\bullet$ Second Group}
\begin{align}
\label{eq6}
B_1 (P\vert \vert Q) & = \sum\limits_{i = 1}^n {\frac{(p_i - q_i )^4}{\sqrt
{(p_i q_i )^3} }},\\
\label{eq7}
B_2 (P\vert \vert Q) & = \sum\limits_{i = 1}^n {\frac{\left( {\sqrt {p_i } -
\sqrt {q_i } } \right)^4}{p_i + q_i }}  ,\\
\label{eq8}
B_3 (P\vert \vert Q) & = \sum\limits_{i = 1}^n {\frac{\left( {\sqrt {p_i } -
\sqrt {q_i } } \right)^4}{\sqrt {p_i q_i } }} ,\\
\label{eq9}
B_4 (P\vert \vert Q) & = \sum\limits_{i = 1}^n {\frac{\left( {p_i - q_i }
\right)^2\left( {\sqrt {p_i } - \sqrt {q_i } } \right)^2}{\left( {p_i + q_i
} \right)\sqrt {p_i q_i } }} ,\\
\label{eq10}
B_5 (P\vert \vert Q) &  = \sum\limits_{i = 1}^n {\frac{\left( {p_i - q_i }
\right)^2\left( {\sqrt {p_i } - \sqrt {q_i } } \right)^2}{p_i q_i }}
\intertext{and}
\label{eq11}
B_6 (P\vert \vert Q) & = \sum\limits_{i = 1}^n {\frac{\left( {p_i - q_i }
\right)^4}{p_i q_i \left( {p_i + q_i } \right)}} .
\end{align}

We observe that the measures appearing in first group are already known in the literature. The first three measures $\Delta (P\vert \vert Q)$, $h(P\vert \vert Q)$ and $\Psi (P\vert \vert Q)$ are respectively known as \textit{triangular discrimination}, \textit{Hellingar's divergence} \textit{and symmetric chi-square divergence}. The measures $K_0 (P\vert \vert Q)$ and $F(P\vert \vert Q)$ are due to Jain and Srivastava \cite{jas} and Kumar and Johnson \cite{kuj} respectively. The measure $B_1 (P\vert \vert Q)$ appearing in the second group is due to Dragomir et al. \cite{dsb}. Other five measures appearing in the second group are new. The measures (\ref{eq3}), (\ref{eq9}) and (\ref{eq10}) are very much similar to each other and the other eight measures are also similar to each other. The measures (\ref{eq6})-(\ref{eq11}) can be written in terms of the measures (\ref{eq1})-(\ref{eq5}). See the expression (\ref{eq28}).

\subsection{Classical Divergence Measures}

All the above eleven measures are without logarithmic expressions. There are three classical divergence measures known in the literature on information theory and statistics are \textit{J-divergence},  \textit{Jensen-Shannon divergence} and  \textit{Arithmetic-Geometric mean divergence} given respectively as
\begin{align}
\label{eq12}
J(P\vert \vert Q) & = \sum\limits_{i = 1}^n {(p_i - q_i )\ln (\frac{p_i }{q_i
})},\\
\label{eq13}
I(P\vert \vert Q) & = \frac{1}{2}\left[ {\sum\limits_{i = 1}^n {p_i \ln \left(
{\frac{2p_i }{p_i + q_i }} \right) + } \sum\limits_{i = 1}^n {q_i \ln \left(
{\frac{2q_i }{p_i + q_i }} \right)} } \right]
\intertext{and}
\label{eq14}
T(P\vert \vert Q) & = \sum\limits_{i = 1}^n {\left( {\frac{p_i + q_i }{2}}
\right)\ln \left( {\frac{p_i + q_i }{2\sqrt {p_i q_i } }} \right)} .
\end{align}

We have the following inequalities \cite{tan3,tan4} \cite{jas}, \cite{kuj} among the measures (\ref{eq1})-(\ref{eq5}) and (\ref{eq12})-(\ref{eq14}).
\[
\frac{1}{4}\Delta (P\vert \vert Q) \leqslant I(P\vert \vert Q) \leqslant
h(P\vert \vert Q) \leqslant \frac{1}{8}J(P\vert \vert Q) \leqslant T(P\vert
\vert Q) \leqslant
\]
\begin{equation}
\label{eq15}
 \leqslant \frac{1}{8}K_0 (P\vert \vert Q) \leqslant \frac{1}{16}\Psi
(P\vert \vert Q) \leqslant \frac{1}{16}F(P\vert \vert Q).
\end{equation}

Some recent applications of Jensen's difference (ref{eq13}) can be seen in Sachlas and Papaioannou \cite{sap}.

\subsection{Exponential Divergence}

For all $P,\;Q \in \Gamma _n $, let consider the following general measure
\begin{equation}
\label{eq16}
K_t (P\vert \vert Q) = \sum\limits_{i = 1}^n {\frac{(p_i - q_i )^{2(t +
1)}}{\left( {p_i q_i } \right)^{\frac{2t + 1}{2}}}} ,\quad t = 0,1,2,3,...
\end{equation}

When $t = 0$, we have the same measure as given in (\ref{eq4}). When $t = 1$, we have $K_1 (P\vert \vert Q) = B_1 (P\vert \vert Q)$. When $2t + 1 = k$, it reduces to one studied by Jain and Srivastava \cite{jas}. We can easily check that the measures $K_t (P\vert \vert Q)$ are convex in the pair of probability distributions $(P,Q) \in \Gamma _n $, $t = 0,1,2,3,...$

\smallskip
Let us write
\begin{equation}
\label{eq17}
E_K (P\vert \vert Q) = \frac{1}{0!}K_0 (P\vert \vert Q) + \frac{1}{1!}K_1
(P\vert \vert Q) + \frac{1}{2!}K_2 (P\vert \vert Q) + \frac{1}{3!}K_3
(P\vert \vert Q) + ...
\end{equation}

The expression (\ref{eq17}) leads us to following \textit{exponential divergence}
\begin{equation}
\label{eq18}
E_K (P\vert \vert Q) = \sum\limits_{i = 1}^n {\frac{(p_i - q_i )^2}{\sqrt
{p_i q_i } }\exp \left( {\frac{(p_i - q_i )^2}{p_i q_i }} \right)} ,\quad
(P,Q) \in \Gamma _n \times \Gamma _n ,
\end{equation}

The eight measures appearing in the inequalities (\ref{eq15}) admits many nonnegative differences. Here our aim to obtain inequalities relating these measures arising due to nonnegative differences from (\ref{eq15}). Also our aim is to bring inequalities among the six measures $B_1 (P\vert \vert Q)$ to $B_6 (P\vert \vert Q)$ and then again study their nonnegative differences. Aim is also to connect the first four  terms of the series (\ref{eq17}) with the known measures. Frequently, we shall use the following two lemmas.

\begin{lemma} If the function $f:[0,\infty ) \to \mathbb{R}$ is convex and normalized, i.e., $f(1) = 0$, then the \textit{f-divergence}, $C_f (P\vert \vert Q)$ given by
\begin{equation}
\label{eq19}
C_f (P\vert \vert Q) =
\sum\limits_{i = 1}^n {q_i f\left( {\frac{p_i }{q_i }} \right)} ,
\end{equation}

\noindent
is nonnegative and convex in the pair of probability distribution $(P,Q) \in \Gamma _n \times \Gamma _n $.
\end{lemma}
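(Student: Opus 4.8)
The plan is to treat the two assertions---nonnegativity and joint convexity---separately, since each rests on a different use of the convexity of $f$.

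For nonnegativity I would apply Jensen's inequality directly. Since $Q \in \Gamma_n$, the weights $q_1,\dots,q_n$ are positive and sum to $1$, so they constitute a probability vector. Applying Jensen's inequality to the convex function $f$ with these weights at the points $p_i/q_i$ gives
\[
C_f(P\vert\vert Q) = \sum_{i=1}^n q_i f\left(\frac{p_i}{q_i}\right) \geqslant f\left(\sum_{i=1}^n q_i \cdot \frac{p_i}{q_i}\right) = f\left(\sum_{i=1}^n p_i\right) = f(1) = 0,
\]
where the last two equalities use $P \in \Gamma_n$ and the normalization $f(1)=0$. This settles nonnegativity with no further work.

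For joint convexity the key observation is that $C_f$ is a finite sum over $i$ of copies of the perspective function $g(x,y) = y\,f(x/y)$ evaluated at $(p_i,q_i)$, and a finite sum of jointly convex functions is jointly convex; hence it suffices to show $g$ is convex on $(0,\infty)\times(0,\infty)$. I would prove this straight from the definition: fix $(x_1,y_1)$ and $(x_2,y_2)$ with $y_1,y_2>0$ and $\lambda\in[0,1]$, put $\bar y = \lambda y_1 + (1-\lambda)y_2$ and $\bar x = \lambda x_1 + (1-\lambda)x_2$, and introduce the reweighted coefficient $\mu = \lambda y_1/\bar y \in [0,1]$, so that $1-\mu = (1-\lambda)y_2/\bar y$ and $\bar x/\bar y = \mu(x_1/y_1) + (1-\mu)(x_2/y_2)$. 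Invoking convexity of $f$ at this point and then multiplying through by $\bar y>0$ yields exactly $g(\bar x,\bar y) \leqslant \lambda\,g(x_1,y_1) + (1-\lambda)\,g(x_2,y_2)$, which is the required joint convexity.

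The main obstacle is the reweighting step in the convexity argument: the weights natural to the convexity of $f$ are the $\mu$, not the original $\lambda$, so one must first verify the algebraic identity relating $\mu$ to $\lambda$ and check that $\bar x/\bar y$ decomposes as the $\mu$-combination of $x_1/y_1$ and $x_2/y_2$ before $f$'s convexity can be applied. Once $g$ is known to be convex, restricting to the convex domain $\Gamma_n\times\Gamma_n$ preserves convexity and the lemma follows.
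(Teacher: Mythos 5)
Your proposal is correct, but it cannot be compared line-by-line with the paper's proof for a simple reason: the paper does not prove this lemma at all. It states the result and attributes it to Csisz\'{a}r (reference \cite{csi}), using it purely as a tool for the later theorems. What you have written is a complete, self-contained proof of that classical result, and it is the standard one: Jensen's inequality with weights $q_i$ at the points $p_i/q_i$ gives nonnegativity (your chain $\sum q_i f(p_i/q_i) \geqslant f(\sum p_i) = f(1) = 0$ is exactly right), and joint convexity reduces to convexity of the perspective function $g(x,y) = y f(x/y)$, which you establish correctly via the reweighting $\mu = \lambda y_1/\bar y$; the identity $\bar x/\bar y = \mu (x_1/y_1) + (1-\mu)(x_2/y_2)$ holds, and multiplying the resulting instance of $f$'s convexity by $\bar y > 0$ yields the desired inequality, after which summation over $i$ and restriction to the convex set $\Gamma_n \times \Gamma_n$ finish the argument. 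One small point of care that your write-up handles implicitly but is worth stating: the lemma as phrased takes $f$ defined on $[0,\infty)$ and the distributions in $\Gamma_n$ have strictly positive entries, so all arguments $p_i/q_i$ lie in $(0,\infty)$ and no boundary conventions (such as $0\,f(0/0)=0$) are needed; in more general treatments of $f$-divergence that allow zero probabilities, the perspective-function argument requires those conventions and a limiting step. As stated here, your proof is complete and supplies exactly what the paper outsources to the literature.
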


\begin{lemma} Let $f_1 ,f_2 :I \subset \mathbb{R}_ + \to \mathbb{R}$ two generating mappings are normalized, i.e., $f_1 (1) = f_2 (1) = 0$ and satisfy the assumptions:

(i) $f_1 $ and $f_2 $ are twice differentiable on $(a,b)$;

(ii) there exists the real constants $\alpha ,\beta $such that $\alpha <
\beta $ and
\[
\alpha \leqslant \frac{f_1 ^{\prime \prime }(x)}{f_2 ^{\prime \prime }(x)}
\leqslant \beta ,
\quad
f_2 ^{\prime \prime }(x) > 0,
\quad
\forall x \in (a,b),
\]

\noindent then we have the inequalities:
\begin{equation}
\label{eq20}
\alpha \mbox{ }C_{f_2 } (P\vert \vert Q) \leqslant C_{f_1 } (P\vert \vert Q)
\leqslant \beta \mbox{ }C_{f_2 } (P\vert \vert Q).
\end{equation}
\end{lemma}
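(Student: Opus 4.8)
The plan is to reduce everything to the preceding lemma (Lemma 1.1) by building two auxiliary generating functions whose nonnegativity encodes the two halves of the desired chain. Specifically, I would introduce
\[
g(x) = f_1(x) - \alpha f_2(x), \qquad h(x) = \beta f_2(x) - f_1(x).
\]
The first step is to check that both $g$ and $h$ are normalized: since $f_1(1) = f_2(1) = 0$, we get $g(1) = h(1) = 0$ at once. The second step is to verify convexity on $(a,b)$. Differentiating twice gives $g''(x) = f_1''(x) - \alpha f_2''(x)$ and $h''(x) = \beta f_2''(x) - f_1''(x)$. Because $f_2''(x) > 0$ throughout $(a,b)$, the two-sided bound in hypothesis (ii) may be multiplied through by $f_2''(x)$ without reversing inequalities, yielding $\alpha f_2''(x) \leqslant f_1''(x) \leqslant \beta f_2''(x)$; hence $g''(x) \geqslant 0$ and $h''(x) \geqslant 0$, so both $g$ and $h$ are convex and normalized.

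With $g$ and $h$ now convex and normalized, I would apply Lemma 1.1 to each, obtaining $C_g(P\vert\vert Q) \geqslant 0$ and $C_h(P\vert\vert Q) \geqslant 0$ for all $(P,Q) \in \Gamma_n \times \Gamma_n$. The final step uses the linearity of the $f$-divergence functional in its generating function: directly from the definition (\ref{eq19}) one has $C_{a f_1 + b f_2}(P\vert\vert Q) = a\,C_{f_1}(P\vert\vert Q) + b\,C_{f_2}(P\vert\vert Q)$ for constants $a,b$. Applying this to $g$ and $h$ converts the two nonnegativity statements into $C_{f_1}(P\vert\vert Q) - \alpha\,C_{f_2}(P\vert\vert Q) \geqslant 0$ and $\beta\,C_{f_2}(P\vert\vert Q) - C_{f_1}(P\vert\vert Q) \geqslant 0$, which together give precisely the claimed inequalities (\ref{eq20}).

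I do not expect any serious obstacle here: once the right auxiliary functions are chosen the argument is essentially bookkeeping. The only point requiring a moment of care is the direction of the inequalities when clearing the denominator $f_2''(x)$ in hypothesis (ii); this is exactly where the sign condition $f_2''(x) > 0$ is indispensable, since dropping it would destroy the convexity of both $g$ and $h$ and hence the applicability of Lemma 1.1. A secondary point worth making explicit is the linearity of $C_f$ in $f$: although immediate from the definition, it is the mechanism that transports the abstract nonnegativity of $C_g$ and $C_h$ into the quantitative bounds relating $C_{f_1}$ and $C_{f_2}$.
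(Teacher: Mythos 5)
Your proof is correct, and it is essentially the paper's own approach in the only sense available: the paper states this lemma without proof, citing \cite{tan3}, and the argument given there is exactly yours --- form $f_1 - \alpha f_2$ and $\beta f_2 - f_1$, verify normalization and (using $f_2''>0$) convexity, apply Lemma 1.1 to each, and conclude via linearity of $C_f$ in the generating function. The only blemish, inherited from the lemma's own statement rather than introduced by you, is the domain mismatch: Lemma 1.1 needs convexity wherever the ratios $p_i/q_i$ can fall, while hypothesis (ii) guarantees it only on $(a,b)$, which is harmless in this paper since every application takes $(a,b)=(0,\infty)$.
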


The first Lemma is due to Csisz\'{a}r \cite{csi} and the second is due to author \cite{tan3}. Some interesting properties of (\ref{eq19}) see Taneja and Kumar \cite{tak}.

\section{Convexity of Difference of Divergences}

The inequalities given in (\ref{eq15}) admit 28 nonnegative differences. Convexity of some of these differences is already studied in Taneja \cite{tan4}. Here we shall study convexity of the differences connected with new measures $K_0 (P\vert \vert Q)$ and $F(P\vert \vert Q)$. We can easily check that in all the cases $f_{( \cdot )} (1) = 0$. According to Lemma 1.1, it is sufficient to show the convexity of the functions $f_{( \cdot )} (x)$, i.e., to show that the second order derivative of $f_{( \cdot )} (x)$, i.e, ${f}''_{( \cdot )} (x)
\geqslant 0$ for all $x > 0$. We shall do each part separately. Throughout, it is understood that $x > 0$.

\bigskip
\noindent
\textbf{(i) For }$D_{K_0 T} (P\vert \vert Q)$\textbf{:} We can write
\[
 D_{K_0 T} (P\vert \vert Q)  = \sum\limits_{i = 1}^n {q_i f_{K_0 T} \left(
{\frac{p_i }{q_i }} \right)} ,
\]
\noindent where
\[
f_{K_0 T} (x) = \frac{1}{8}f_{K_0 } (x) - f_T (x)
 = \frac{1}{8}\frac{\left( {x - 1} \right)^2}{\sqrt x } - \frac{x + 1}{2}\ln
\left( {\frac{x + 1}{2\sqrt x }} \right).
\]
\noindent This gives
\[
 {f}''_{K_0 T} (x) = \frac{\left( {3x + 4\sqrt x + 3} \right)\left( {\sqrt x
- 1} \right)^4}{32x^{5 / 2}(x + 1)} \geqslant 0.
\]

\smallskip
\noindent
\textbf{(ii) For }$D_{K_0 J} (P\vert \vert Q)$\textbf{:} We can write
\[
D_{K_0 J} (P\vert \vert Q) = \sum\limits_{i = 1}^n {q_i f_{K_0 J} \left(
{\frac{p_i }{q_i }} \right)} ,
\]
\noindent where
\[
f_{K_0 J} (x) = \frac{1}{8}f_{K_0 } (x) - \frac{1}{8}f_J (x)
 = \frac{1}{8}\frac{\left( {x - 1} \right)^2}{\sqrt x } - \frac{1}{8}\left(
{x - 1} \right)\ln x.
\]
\noindent This gives
\[
{f}''_{K_0 J} (x) = \frac{\left( {3x + 2\sqrt x + 3} \right)\left( {\sqrt x
- 1} \right)^2}{32x^{5 / 2}} \geqslant 0.
\]

\smallskip
\noindent
\textbf{(iii) For }$D_{K_0 h} (P\vert \vert Q)$\textbf{:} We can write
\[
D_{K_0 h} (P\vert \vert Q) = \sum\limits_{i = 1}^n {q_i f_{K_0 h} \left(
{\frac{p_i }{q_i }} \right)} ,
\]
\noindent where
\[
f_{K_0 h} (x) = \frac{1}{8}f_{K_0 } (x) - f_h (x)
 = \frac{1}{8}\frac{\left( {x - 1} \right)^2}{\sqrt x } - \frac{1}{2}\left(
{\sqrt x - 1} \right)^2 = \frac{\left( {\sqrt x - 1} \right)^4}{8\sqrt x }.
\]
\noindent This gives
\[
{f}''_{K_0 h} (x) = \frac{3\left( {x - 1} \right)^2}{32x^{5 / 2}} \geqslant
0.
\]

\smallskip
\noindent
\textbf{(iv) For }$D_{K_0 I} (P\vert \vert Q)$\textbf{:} We can write
\[
D_{K_0 I} (P\vert \vert Q) = \sum\limits_{i = 1}^n {q_i f_{K_0 I} \left(
{\frac{p_i }{q_i }} \right)} ,
\]
\noindent where
\[
f_{K_0 I} (x) = \frac{1}{8}f_{K_0 } (x) - f_I (x)
 = \frac{1}{8}\frac{\left( {x - 1} \right)^2}{\sqrt x } - \frac{x}{2}\ln x +
\frac{x + 1}{2}\ln \left( {\frac{x + 1}{2}} \right).
\]
\noindent This gives
\[
{f}''_{K_0 I} (x) = \frac{\left( {3x^2 + 6x^{3 / 2} + 14x + 6\sqrt x + 3}
\right)\left( {\sqrt x - 1} \right)^2}{32x^{5 / 2}(x + 1)} \geqslant 0.
\]

\smallskip
\noindent
\textbf{(v) For }$D_{K_0 \Delta } (P\vert \vert Q)$\textbf{:} We can write
\[
D_{K_0 \Delta } (P\vert \vert Q) = \sum\limits_{i = 1}^n {q_i f_{K_0 \Delta
} \left( {\frac{p_i }{q_i }} \right)} ,
\]
\noindent where
\[
f_{K_0 \Delta } (x) = \frac{1}{8}f_{K_0 } (x) - \frac{1}{4}f_\Delta (x)
\]
\[
 = \frac{1}{8}\frac{\left( {x - 1} \right)^2}{\sqrt x } -
\frac{1}{4}\frac{\left( {x - 1} \right)^2}{x + 1} = \frac{\left( {x - 1}
\right)^2\left( {\sqrt x - 1} \right)^2}{8\sqrt x \left( {x + 1} \right)}.
\]
\noindent This gives
\[
{f}''_{K_0 \Delta } (x) = \frac{\left( {\sqrt x - 1} \right)^2\left(
{\begin{array}{l}
 3x^4 + 6x^{7 / 2} + 20x^3 + 34x^{5 / 2} + \\
 + 66x^2 + 34x^{3 / 2} + 20x + 6\sqrt x + 3 \\
 \end{array}} \right)}{32x^{5 / 2}(x + 1)^3} \geqslant 0.
\]

\smallskip
\noindent
\textbf{(vi) For }$D_{\Psi K_0 } (P\vert \vert Q)$\textbf{:} We can write
\[
D_{\Psi K_0 } (P\vert \vert Q) = \sum\limits_{i = 1}^n {q_i f_{\Psi K_0 }
\left( {\frac{p_i }{q_i }} \right)} ,
\]
\noindent where
\[
f_{\Psi K_0 } (x) = \frac{1}{16}f_\Psi (x) - \frac{1}{8}f_{K_0 } (x)
\]
\[
 = \frac{1}{16}\frac{\left( {x - 1} \right)^2\left( {x + 1} \right)}{x} -
\frac{1}{8}\frac{\left( {x - 1} \right)}{\sqrt x } = \frac{\left( {x - 1}
\right)^2\left( {\sqrt x - 1} \right)^2}{16x}.
\]
\noindent This gives
\[
{f}''_{\Psi K_0 } (x) = \frac{\left( {\sqrt x - 1} \right)^2\left( {4x^2 +
5x^{3 / 2} + 6x + 5\sqrt x + 4} \right)}{32x^3} \geqslant 0.
\]

\smallskip
\noindent
\textbf{(vii) For }$D_{F\Psi } (P\vert \vert Q)$\textbf{:} We can write
\[
D_{F\Psi } (P\vert \vert Q) = \sum\limits_{i = 1}^n {q_i f_{F\Psi } \left(
{\frac{p_i }{q_i }} \right)} ,
\]
\noindent where
\begin{align}
f_{F\Psi } (x) & = \frac{1}{16}f_F (x) - \frac{1}{16}f_\Psi (x)\notag\\
 & = \frac{1}{32}\frac{\left( {x^2 - 1} \right)^2}{x^{3 / 2}} -
\frac{1}{16}\frac{\left( {x - 1} \right)^2\left( {x + 1} \right)}{x} \notag\\
& =\frac{\left( {x + 1} \right)\left( {\sqrt x + 1} \right)^2\left( {\sqrt x -
1} \right)^4}{32x^{3 / 2}}.\notag
\end{align}
\noindent This gives
\[
{f}''_{F\Psi } (x) = \frac{\left( {\sqrt x - 1} \right)^2\left(
{\begin{array}{l}
 15x^3 + 14x^{5 / 2} + 13x^2 + \\
 + 12x^{3 / 2} + 13x + 14\sqrt x + 15 \\
 \end{array}} \right)}{128x^{7 / 2}} \geqslant 0.
\]

\smallskip
\noindent
\textbf{(viii) For }$D_{FK_0 } (P\vert \vert Q)$\textbf{:} We can write
\[
D_{FK_0 } (P\vert \vert Q) = \sum\limits_{i = 1}^n {q_i f_{FK_0 } \left(
{\frac{p_i }{q_i }} \right)} ,
\]
\noindent where
\begin{align}
f_{FK_0 } (x) & = \frac{1}{16}f_F (x) - \frac{1}{8}f_{K_0 } (x)\notag\\
&  = \frac{1}{32}\frac{\left( {x^2 - 1} \right)^2}{x^{3 / 2}} -
\frac{1}{8}\frac{\left( {x - 1} \right)^2}{\sqrt x } = \frac{\left( {x - 1}
\right)^4}{32x^{3 / 2}}.\notag
\end{align}
\noindent This gives
\[
{f}''_{FK_0 } (x) = \frac{3\left( {5x^2 + 6x + 5} \right)\left( {x - 1}
\right)^2}{128x^{7 / 2}} \geqslant 0.
\]

\smallskip
\noindent
\textbf{(ix) For }$D_{FT} (P\vert \vert Q)$\textbf{:} We can write
\[
D_{FT} (P\vert \vert Q) = \sum\limits_{i = 1}^n {q_i f_{FT} \left(
{\frac{p_i }{q_i }} \right)} ,
\]
\noindent where
\[
f_{FT} (x) = \frac{1}{16}f_F (x) - f_T (x)
 = \frac{1}{32}\frac{\left( {x^2 - 1} \right)^2}{x^{3 / 2}} - \left(
{\frac{x + 1}{2}} \right)\ln \left( {\frac{x + 1}{2\sqrt x }} \right).
\]
\noindent This gives
\[
{f}''_{FT} (x) = \frac{\left( {\sqrt x - 1} \right)^2\left(
{\begin{array}{l}
 15x^4 + 30x^{7 / 2} + 60x^3 + 58x^{5 / 2} + \\
 + 58x^2 + 58x^{3 / 2} + 60x + 30\sqrt x + 15 \\
 \end{array}} \right)}{128x^{7 / 2}(x + 1)} \geqslant 0.
\]

\smallskip
\noindent
\textbf{(x) For }$D_{FJ} (P\vert \vert Q)$\textbf{:} We can write
\[
D_{FJ} (P\vert \vert Q) = \sum\limits_{i = 1}^n {q_i f_{FJ} \left(
{\frac{p_i }{q_i }} \right)} ,
\]
\noindent where
\[
f_{FJ} (x) = \frac{1}{16}f_F (x) - \frac{1}{8}f_J (x)
 = \frac{1}{32}\frac{\left( {x^2 - 1} \right)^2}{x^{3 / 2}} -
\frac{1}{8}\left( {x - 1} \right)\ln x.
\]
\noindent This gives
\[
{f}''_{FJ} (x) =  \frac{\left( {\sqrt x - 1} \right)^2\left(
{\begin{array}{l}
 15x^3 + 30x^{5 / 2} + 45x^2 + \\
 + 44x^{3 / 2} + 45x + 30\sqrt x + 15 \\
 \end{array}} \right)}{128x^{7 / 2}} \geqslant 0.
\]

\smallskip
\noindent
\textbf{(xi) For }$D_{Fh} (P\vert \vert Q)$\textbf{:} We can write
\[
D_{Fh} (P\vert \vert Q) = \sum\limits_{i = 1}^n {q_i f_{Fh} \left(
{\frac{p_i }{q_i }} \right)} ,
\]
\noindent where
\begin{align}
f_{Fh} (x) & = \frac{1}{16}f_F (x) - f_h (x) = \frac{1}{32}\frac{\left( {x^2 -
1} \right)^2}{x^{3 / 2}} - \frac{1}{2}\left( {\sqrt x - 1} \right)^2\notag\\
& = \frac{\left( {x^2 + 4x^{3 / 2} + 10x + 4\sqrt x + 1} \right)\left( {\sqrt
x - 1} \right)^4}{32x^{3 / 2}}.\notag
\end{align}
\noindent This gives
\[
{f}''_{Fh} (x) = \frac{15\left( {x^2 - 1} \right)^2}{128x^{7 / 2}} \geqslant
0.
\]

\smallskip
\noindent
\textbf{(xii) For }$D_{FI} (P\vert \vert Q)$\textbf{:} We can write
\[
D_{FI} (P\vert \vert Q) = \sum\limits_{i = 1}^n {q_i f_{FI} \left(
{\frac{p_i }{q_i }} \right)} ,
\]
\noindent where
\begin{align}
f_{FI} (x) & = \frac{1}{16}f_F (x) - f_I (x) \notag\\
& = \frac{1}{32}\frac{\left( {x^2 -
1} \right)^2}{x^{3 / 2}} - \frac{1}{2}x\ln x + \left( {\frac{x + 1}{2}}
\right)\ln \left( {\frac{x + 1}{2}} \right).\notag
\end{align}
\noindent This gives
\[
{f}''_{FI} (x) = \frac{\left( {\sqrt x - 1} \right)^2\left(
{\begin{array}{l}
 15x^4 + 30x^{7 / 2} + 60x^3 + 90x^{5 / 2} + \\
 + 122x^2 + 90x^{3 / 2} + 60x + 30\sqrt x + 15 \\
 \end{array}} \right)}{128x^{7 / 2}(x + 1)} \geqslant 0.
\]

\smallskip
\noindent
\textbf{(xiii) For }$D_{F\Delta } (P\vert \vert Q)$\textbf{:} We can write
\[
D_{F\Delta } (P\vert \vert Q) = \sum\limits_{i = 1}^n {q_i f_{F\Delta }
\left( {\frac{p_i }{q_i }} \right)} ,
\]
\noindent where
\begin{align}
f_{F\Delta } (x) & = \frac{1}{16}f_F (x) - \frac{1}{4}f_\Delta (x) =
\frac{1}{32}\frac{\left( {x^2 - 1} \right)^2}{x^{3 / 2}} -
\frac{1}{4}\frac{\left( {x - 1} \right)^2}{x + 1}\notag\\
& = \frac{\left( {x - 1} \right)^2\left( {\sqrt x - 1} \right)^2\left( {x^2 +
2x^{3 / 2} + 6x + 2\sqrt x + 1} \right)}{32x^{3 / 2}(x + 1)}.\notag
\end{align}
\noindent This gives
\[
{f}''_{F\Delta } (x) = \frac{\left( {\sqrt x - 1} \right)^2 \left( {\begin{array}{l}
 15x^6 + 30x^{11 / 6} + 90x^5 + 150x^{9 / 2} + \\
 + 257x^4 + 364x^{7 / 2} + 492x^3 + 364x^{5 / 2} + \\
 + 257x^2 + 150x^{3 / 2} + 90x + 30\sqrt x + 15 \\
 \end{array}} \right)}{128x^{7 / 2}(x
+ 1)^3} \geqslant 0.
\]

\begin{remark} In view of above expressions we can relate the measures of group 2 in terms of measures of group 1 as $B_1 (P\vert \vert Q) = 32D_{FK_0 } (P\vert \vert Q)$, $B_3 (P\vert \vert Q) = 8D_{K_0 h} (P\vert \vert Q)$, $B_4 (P\vert \vert Q) = 8D_{K_0 \Delta } (P\vert \vert Q)$, and $B_5 (P\vert \vert Q) = 16D_{\Psi K_0 } (P\vert \vert Q)$. The measures $B_2 (P\vert
\vert Q)$ and $B_6 (P\vert \vert Q)$ can also be written as $B_2 (P\vert \vert Q) = 4D_{h\Delta } (P\vert \vert Q)$ and $B_6 (P\vert \vert Q) = 16D_{\Psi \Delta } (P\vert \vert Q)$.
\end{remark}

\section{Sequences of Inequalities}

The expression (\ref{eq15}) admits 28 nonnegative differences. Some of these differences are already studied in Taneja \cite{tan4}. Here we shall consider only those connected with the measures $K_0 (P\vert \vert Q)$, $\Psi (P\vert \vert Q)$ and $F(P\vert \vert Q)$. Based on these differences the following theorem hold:

\begin{theorem} The following sequences of inequalities hold:
\begin{align}
& \left( {\begin{array}{l}
 \frac{1}{3} D_{T\Delta } \\
 2D_{hI} \\
 \end{array}} \right) \leqslant  \left( {\begin{array}{l}
 \frac{1}{3}D_{K_0 \Delta } \\
 2D_{Jh} \\
 \end{array}} \right) \leqslant
\frac{1}{2}D_{K_0 I} \leqslant \frac{2}{3}D_{K_0 h} \leqslant \left(
{\begin{array}{l}
 D_{K_0 J} \\
 \frac{1}{6}D_{\Psi \Delta } \\
 \end{array}} \right) \leqslant \frac{1}{5}D_{\Psi I} \leqslant  \frac{1}{4}D_{\Psi J}\notag\\
\label{eq21}
& \leqslant \frac{1}{3}\left( {\begin{array}{l}
 D_{\Psi T} \\
 D_{\Psi K_0 } \leqslant \frac{1}{9}D_{F\Delta } \\
 \end{array}} \right) \leqslant \frac{1}{8}D_{FI} \leqslant
\frac{2}{15}D_{Fh} \leqslant \frac{1}{7}D_{FJ} \leqslant \frac{1}{6}\left(
{\begin{array}{l}
 D_{FT} \\
 D_{FK_0 } \\
 \end{array}} \right) \leqslant \frac{1}{3}D_{F\Psi },
\end{align}

\noindent where, for example, $D_{T\Delta } = T - \frac{1}{4}\Delta $, $D_{K_0 \Delta
} = \frac{1}{8}K_0 - \frac{1}{4}\Delta $, etc.
\end{theorem}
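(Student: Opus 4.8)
The plan is to obtain every link of the chain (\ref{eq21}) from Lemma 1.2, in the manner of the single representative computation given below. First I would recall the setup of Section 2: each difference $D_{XY}(P\vert\vert Q)$ is a Csisz\'ar $f$-divergence $C_{f_{XY}}(P\vert\vert Q)$ whose generating function $f_{XY}$ is convex, normalized ($f_{XY}(1)=0$) and twice differentiable on $(0,\infty)$, the second derivatives being those displayed in parts (i)--(xiii) above and, for the seven differences $D_{T\Delta},D_{hI},D_{Jh},D_{\Psi\Delta},D_{\Psi I},D_{\Psi J},D_{\Psi T}$ not recomputed here, those in Taneja \cite{tan4}. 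Since $C_f(P\vert\vert Q)$ is linear in $f$, a scaled difference $c\,D_{XY}$ is generated by $c\,f_{XY}$. This reduces the whole theorem to a finite list of pairwise comparisons of $f$-divergences, which is exactly the situation governed by (\ref{eq20}).

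For a typical link $c\,D_{AB}\leq c'\,D_{A'B'}$ I would apply (\ref{eq20}) with $f_1=f_{AB}$, $f_2=f_{A'B'}$ and
\[
\beta=\sup_{x>0}\frac{f_{AB}''(x)}{f_{A'B'}''(x)},
\]
giving $D_{AB}\leq\beta\,D_{A'B'}$; the scalars in (\ref{eq21}) are chosen precisely so that $c\beta=c'$, which is what makes each link sharp. The location of the extremum is dictated by symmetry: since every base divergence is symmetric in $P$ and $Q$, each generator obeys the conjugate relation $f_{XY}(x)=x\,f_{XY}(1/x)$, whence $f_{XY}''(x)=x^{-3}f_{XY}''(1/x)$ and the product $x^{3/2}f_{XY}''(x)$ is invariant under $x\mapsto 1/x$. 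The factor $x^{3/2}$ therefore cancels in the ratio, so $f_{AB}''/f_{A'B'}''$ is itself invariant under $x\mapsto 1/x$ and $x=1$ is automatically a critical point. I would then confirm $\beta=\bigl(f_{AB}''/f_{A'B'}''\bigr)\big|_{x=1}$ by clearing denominators and exhibiting a factorization of the shape
\[
\beta\,f_{A'B'}''(x)-f_{AB}''(x)=\frac{(\sqrt{x}-1)^{2}\,\bigl(\text{polynomial in }\sqrt{x}\text{ with positive coefficients}\bigr)}{(\text{positive monomial})\,(x+1)^{k}}\geq 0 .
\]

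As a representative case, for the link $\frac12 D_{K_0 I}\leq\frac23 D_{K_0 h}$ I would form
\[
\frac{f_{K_0 I}''(x)}{f_{K_0 h}''(x)}=\frac{3x^{2}+6x^{3/2}+14x+6\sqrt{x}+3}{3(\sqrt{x}+1)^{2}(x+1)},
\]
note that it equals $\frac43$ at $x=1$, and establish $\beta=\frac43$ from the factorization $4(\sqrt{x}+1)^{2}(x+1)-(3x^{2}+6x^{3/2}+14x+6\sqrt{x}+3)=(\sqrt{x}-1)^{2}(x+4\sqrt{x}+1)\geq 0$; since $\frac12\cdot\frac43=\frac23$, Lemma 1.2 yields the link. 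Every remaining link, together with each branch inside the bracketed groups (for instance $\frac13 D_{\Psi K_0}\leq\frac19 D_{F\Delta}$ and the parallel entries stacked at a common level), is treated by these same three steps; the entries in one bracket need only be compared with the levels immediately above and below, never with one another.

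The bulky but routine ingredient is the algebra of these ratios; the single genuine obstacle is to guarantee, link by link, that the critical value at $x=1$ really is the supremum and not a minimum (which would force a larger constant). I would dispose of this uniformly: every generating function occurring in (\ref{eq21}) has a second derivative vanishing to exactly order two at $x=1$ (each carries a factor $(\sqrt{x}-1)^{2}$ or $(x-1)^{2}$), so every ratio has a finite nonzero value there, the $x\mapsto 1/x$ invariance pins the critical point at $x=1$, and the nonnegative quadratic remainder $(\sqrt{x}-1)^{2}\times(\text{positive})$ certifies both that $x=1$ is a maximum and that the displayed constant is sharp. Concatenating the verified links in the order shown in (\ref{eq21}) completes the proof.
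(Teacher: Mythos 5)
Your proposal is correct, and its skeleton is the paper's own: the same decomposition of (\ref{eq21}) into pairwise links (with bracketed entries compared only against the adjacent levels, never with one another), the same key Lemma 1.2, and the same constants obtained as the value of $f_1''/f_2''$ at $x=1$. Your representative ratio for $\tfrac{1}{2}D_{K_0 I}\leqslant\tfrac{2}{3}D_{K_0 h}$ is literally the paper's part (v), and your factorization $4(\sqrt{x}+1)^2(x+1)-(3x^2+6x^{3/2}+14x+6\sqrt{x}+3)=(\sqrt{x}-1)^2(x+4\sqrt{x}+1)$ checks out. Where you genuinely depart from the paper is in how the supremum of the ratio is certified. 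The paper, in each of its parts (i)--(xviii), differentiates $g=f_1''/f_2''$ and establishes the sign pattern $g'>0$ on $(0,1)$, $g'<0$ on $(1,\infty)$, so that $g$ peaks at $x=1$; you instead clear denominators and factor $\beta f_2''-f_1''$ as $(\sqrt{x}-1)^2$ times a manifestly nonnegative quantity, which verifies the hypothesis of (\ref{eq20}) directly and with no differentiation at all, and you supply a structural reason (absent from the paper) why $x=1$ is the right point: symmetry of every underlying divergence gives $f(x)=xf(1/x)$, hence $f''(x)=x^{-3}f''(1/x)$, so each ratio is invariant under $x\mapsto 1/x$ and $x=1$ is automatically critical. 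Your route is logically leaner --- the paper's monotonicity statements are stronger than what Lemma 1.2 requires, and your factorization alone suffices --- and it trades differentiation of unwieldy rational functions in $\sqrt{x}$ for polynomial algebra. What it does not eliminate is the case-by-case labor: the nonnegativity you need in each link is guaranteed (it is equivalent to what the paper's sign analysis proves), but the explicit factorizations must still be exhibited for the remaining links, and whether each cofactor has all positive coefficients is itself a per-link check; as written, your text is a verified method plus one completed instance rather than a complete verification of all links of (\ref{eq21}).
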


\begin{proof} All the measures appearing in the inequalities (\ref{eq21}) can be written as (\ref{eq19}), where we can easily check that all these differences are convex functions in the pair of probability distribution $(P,Q) \in \Gamma _n \times \Gamma _n $. We shall make use of the Lemma 1.2 and shall do each part separately.

\smallskip
\noindent
\textbf{(i) For } $D_{T\Delta } (P\vert \vert Q) \leqslant D_{K_0 \Delta } (P\vert \vert Q)$\textbf{: } For all $x > 0,\;x \ne 1$, let us consider the function
\begin{equation}
\label{eq22}
g_{T\Delta \_K_0 \Delta } (x) = \frac{{f}''_{T\Delta } (x)}{{f}''_{K_0
\Delta } (x)} = \frac{(8x^2 + 32x + 8)\sqrt x \left( {\sqrt x + 1}
\right)^2}{\left( {\begin{array}{l}
 3x^4 + 6x^{7 / 2} + 20x^3 + 34x^{5 / 2} + \\
 + 66x^2 + 34x^{3 / 2} + 20x + 6\sqrt x + 3 \\
 \end{array}} \right)}.
\end{equation}

Calculating the first order derivative of the function $g_{T\Delta \_K_0 \Delta } (x)$ with respect to $x$, $x > 0$, one gets
\begin{align}
{g}'_{T\Delta \_K_0 \Delta } (x) & = - \frac{4\left( {\sqrt x + 1}
\right) \left( {x - 1} \right)}{\sqrt x \left( {\begin{array}{l}
 3x^4 + 6x^{7 / 2} + 20x^3 + 34x^{5 / 2} + \\
 + 66x^2 + 34x^{3 / 2} + 20x + 6\sqrt x + 3 \\
 \end{array}} \right)^2}\times \notag\\
\label{eq23}
& \hspace{20pt} \times \left( {\begin{array}{l}
 3x^5 + 12x^{9 / 2} + 37x^4 + 88x^{7 / 2} + \\
 + 56x^3 + 88x^{5 / 2} + 56x^2 + \\
 + 88x^{3 / 2} + 37x + 12\sqrt x + 3 \\
 \end{array}} \right)
\begin{cases}
 { > 0,} & {0 < x < 1} \\
 { < 0,} & {x > 1} \\
\end{cases}.
\end{align}

In view of (\ref{eq23}) we conclude that the function $g_{T\Delta \_K_0 \Delta } (x)$ is increasing in $x \in (0,1)$ and decreasing in $x \in (1,\infty )$. Also we have
\begin{equation}
\label{eq24}
\beta _{T\Delta \_K_0 \Delta } = \mathop {\sup }\limits_{x \in (0,\infty )}
g_{T\Delta \_K_0 \Delta } (x) = \mathop {\lim }\limits_{x \to 1} g_{T\Delta
\_K_0 \Delta } (x) = 1.
\end{equation}

By the application of (\ref{eq20}) with (\ref{eq24}) we get the required result.

\bigskip
\noindent
\textit{From the above proof we observe that it sufficient to get the expressions similar to (\ref{eq23}) and calculate the value of }$\beta _{ }$ \textit{as given in (\ref{eq24}). For the other parts below we shall avoid all these details. We shall just write the expressions similar to (\ref{eq22}), (\ref{eq23}) and (\ref{eq24}). Then applying the Lemma 1.2, we get the required result. Throughout,  it is understood that }$x > 0,\;x \ne 1.$

\bigskip
\noindent
\textbf{(ii) For }$D_{hI} (P\vert \vert Q) \leqslant \frac{1}{6}D_{K_0 \Delta } (P\vert \vert Q)$\textbf{: }We have
\[
g_{hI\_K_0 \Delta } (x) = \frac{8x\left( {x + 1} \right)^2}{\left(
{\begin{array}{l}
 3x^4 + 6x^{7 / 2} + 20x^3 + 34x^{5 / 2} + \\
 + 66x^2 + 34x^{3 / 2} + 20x + 6\sqrt x + 3 \\
 \end{array}} \right)},
 \]
 \begin{align}
{g}'_{hI\_K_0 \Delta } (x) & = - \frac{8(x - 1)(x + 1)}{\sqrt x \left(
{\begin{array}{l}
 3x^4 + 6x^{7 / 2} + 20x^3 + 34x^{5 / 2} + \\
 + 66x^2 + 34x^{3 / 2} + 20x + 6\sqrt x + 3 \\
 \end{array}} \right)^2}\times\notag\\
& \hspace{20pt} \times \left( {\begin{array}{l}
 3x^4 + 3x^{7 / 2} + 5x^3 + 7x\left( {x - 1} \right)^2 + \\
 + x^{5 / 2} + x^{3 / 2} + 5x + 3\sqrt x + 3 \\
 \end{array}} \right)
\begin{cases}
 { > 0,} & {0 < x < 1} \\
 { < 0,} & {x > 1} \\
\end{cases}\notag
\end{align}
\noindent and
\[
\beta _{hI\_K_0 \Delta }  = \mathop {\sup }\limits_{x \in (0,\infty )}
g_{hI\_K_0 \Delta } (x) = \mathop {\lim }\limits_{x \to 1} g_{hI\_K_0 \Delta
} (x) = \frac{1}{6}.
\]

\smallskip
\noindent
\textbf{(iii) For }$D_{K_0 \Delta } (P\vert \vert Q) \leqslant \frac{3}{2}D_{K_0 I} (P\vert \vert Q)$\textbf{:} We have
\[
g_{K_0 \Delta \_K_0 I} (x)  = \frac{{f}''_{K_0 \Delta } (x)}{{f}''_{K_0 I}
(x)}  = \frac{\left( {\begin{array}{l}
3 x^4 + 6x^{7 / 2} + 20x^3 + 34x^{5 / 2} + \\
 + \,66x^2 + 34x^{3 / 2} + 20x + 6\sqrt x + 3 \\
 \end{array}} \right)}{(x + 1)^2\left( {3x^2 + 6x^{3 / 2} + 14x + 6\sqrt x +
3} \right)},
\]
\begin{align}
{g}'_{K_0 \Delta \_K_0 I} (x) & =
 - \frac{8\left( {x - 1} \right)\sqrt x \left( {3x + 8\sqrt x + 3}
\right)}{\left( {x + 1} \right)^3\left( {3x^2 + 6x^{3 / 2} + 14x + 6\sqrt x
+ 3} \right)^2}\times\notag\\
& \hspace{20pt} \times \left( {3x^2 + 4x^{3 / 2} + 10x + 4\sqrt x + 3} \right)
\begin{cases}
 { > 0,} & {0 < x < 1} \\
 { < 0,} & {x > 1} \\
\end{cases}\notag
\end{align}
\noindent and
\[
\beta _{K_0 \Delta \_K_0 I}  = \mathop {\sup }\limits_{x \in (0,\infty )}
g_{K_0 \Delta \_K_0 I} (x) = \mathop {\lim }\limits_{x \to 1} g_{K_0 \Delta
\_K_0 I} (x) = \frac{3}{2}.
\]

\smallskip
\noindent
\textbf{(iv) For }$D_{Jh} (P\vert \vert Q) \leqslant \frac{1}{4}D_{K_0 I} (P\vert \vert Q)$\textbf{: }We have
\[
g_{Jh\_K_0 I} (x)  = \frac{{f}''_{Jh} (x)}{{f}''_{K_0 I} (x)} = \frac{4\sqrt
x \left( {x + 1} \right)}{3x^2 + 6x^{3 / 2} + 14x + 6\sqrt x + 3},
\]
\[
{g}'_{Jh\_K_0 I} (x)  = - \frac{2(x - 1)\left[ {2(x^2 + 1) + (x - 1)^2} \right]}{\sqrt x \left( {3x^2 + 6x^{3 / 2}
+ 14x + 6\sqrt x + 3} \right)^2}
\begin{cases}
 { > 0,} & {0 < x < 1} \\
 { < 0,} & {x > 1} \\
\end{cases}.
\]

\noindent and
\[
\beta _{Jh \_K_0 I} = \mathop {\sup }\limits_{x \in (0,\infty )}
g_{Jh \_K_0 I} (x) = \mathop {\lim }\limits_{x \to 1} g_{Jh \_K_0 I} (x) = \frac{1}{4}.\notag
\]

\smallskip
\noindent
\textbf{(v) For }$D_{K_0 I} (P\vert \vert Q) \leqslant \frac{4}{3}D_{K_0 h} (P\vert \vert Q)$\textbf{: }We have
\[
g_{K_0 I\_K_0 h} (x)  = \frac{{f}''_{K_0 I} (x)}{{f}''_{K_0 h} (x)} =
\frac{3x^2 + 6x^{3 / 2} + 14x + 6\sqrt x + 3}{3 \left( {x + 1} \right)\left(
{\sqrt x + 1} \right)^2},
\]
\[
{g}'_{K_0 I\_K_0 h} (x)  =
 - \frac{8\left( {\sqrt x - 1} \right)\left( {x + \sqrt x + 1}
\right)}{3\left( {\sqrt x + 1} \right)^3\left( {x + 1} \right)^2}
\begin{cases}
 { > 0,} & {0 < x < 1} \\
 { < 0,} & {x > 1} \\
\end{cases}
\]
\noindent and
\[
\beta _{K_0 I\_K_0 h} = \mathop {\sup }\limits_{x \in (0,\infty )} g_{K_0
I\_K_0 h} (x) = \mathop {\lim }\limits_{x \to 1} g_{K_0 I\_K_0 h} (x) =
\frac{4}{3}.\notag
\]

\smallskip
\noindent
\textbf{(vi) For }$D_{K_0 h} (P\vert \vert Q) \leqslant \frac{3}{2}D_{K_0 J} (P\vert \vert Q)$\textbf{: }We have
\[
g_{K_0 h\_K_0 J} (x)  = \frac{{f}''_{K_0 h} (x)}{{f}''_{K_0 J} (x)} =
\frac{3\left( {\sqrt x + 1} \right)^2}{3x + 2\sqrt x + 3},
\]
\[
{g}'_{K_0 h\_K_0 J} (x)  =  - \frac{6\left( {x - 1} \right)}{\sqrt x \left( {3x + 2\sqrt x + 3}
\right)^2}
\begin{cases}
 { > 0,} & {0 < x < 1} \\
 { < 0,} & {x > 1} \\
\end{cases}
\]
\noindent and
\[
\beta _{K_0 h\_K_0 J}  = \mathop {\sup }\limits_{x \in (0,\infty )} g_{K_0
h\_K_0 J} (x) = \mathop {\lim }\limits_{x \to 1} g_{K_0 h\_K_0 J} (x) =
\frac{3}{2}.
\]

\smallskip
\noindent
\textbf{(vii) For }$D_{K_0 h} (P\vert \vert Q) \leqslant \frac{1}{4}D_{\Psi \Delta } (P\vert \vert Q)$\textbf{: }We have
\[
g_{K_0 h\_\Psi \Delta } (x) = \frac{{f}''_{K_0 h} (x)}{{f}''_{\Psi \Delta }
(x)} = \frac{3\sqrt x \left( {x + 1} \right)^3}{4\left( {x^4 + 5x^3 + 12x^2
+ 5x + 1} \right)},
\]
\[
{g}'_{K_0 h\_\Psi \Delta } (x)  =  - \frac{3(x - 1)^3(x + 1)^2(x^2 + 5x + 1)}{8\sqrt x \left( {x^4 + 5x^3 +
12x^2 + 5x + 1} \right)^2}
\begin{cases}
 { > 0,} & {0 < x < 1} \\
 { < 0,} & {x > 1} \\
\end{cases}
\]
\noindent and
\[
\beta _{K_0 h\_\Psi \Delta }  = \mathop {\sup }\limits_{x \in (0,\infty )}
g_{K_0 h\_\Psi \Delta } (x) = \mathop {\lim }\limits_{x \to 1} g_{K_0
h\_\Psi \Delta } (x) = \frac{1}{4}.
\]

\smallskip
\noindent
\textbf{(viii) For }$D_{K_0 J} (P\vert \vert Q) \leqslant \frac{1}{5}D_{\Psi I} (P\vert \vert Q)$\textbf{: }We have
\[
g_{K_0 J\_\Psi I} (x) = \frac{{f}''_{K_0 J} (x)}{{f}''_{\Psi I} (x)} =
\frac{\sqrt x \left( {3x + 2\sqrt x + 3} \right)\left( {x + 1}
\right)}{4\left( {x^2 + 3x + 1} \right)\left( {\sqrt x + 1} \right)^2},
\]
\begin{align}
{g}'_{K_0 J\_\Psi I} (x) &  = - \frac{\left( {\sqrt x - 1} \right)\left( {3x + \sqrt x + 3}
\right)}{8\sqrt x \left( {x^2 + 3x + 1} \right)^2\left( {\sqrt
x + 1} \right)^3}\times\notag\\
& \hspace{20pt}\times \left( {\begin{array}{l}
 x^3 + x^{5 / 2} + 2x\left( {x + 1} \right) + \\
 + x\left( {\sqrt x - 1} \right)^2 + \sqrt x + 1 \\
 \end{array}} \right)
\begin{cases}
 { > 0,} & {0 < x < 1} \\
 { < 0,} & {x > 1} \\
\end{cases} \notag
\end{align}
\noindent and
\[
\beta _{K_0 J\_\Psi I}  = \mathop {\sup }\limits_{x \in (0,\infty )} g_{K_0
J\_\Psi I} (x) = \mathop {\lim }\limits_{x \to 1} g_{K_0 J\_\Psi I} (x) =
\frac{1}{5}.
\]

\smallskip
\noindent
\textbf{(ix) For }$D_{\Psi J} (P\vert \vert Q) \leqslant \frac{4}{3}D_{\Psi K_0 } (P\vert \vert Q)$\textbf{: }We have
\[
g_{\Psi J\_\Psi K_0 } (x) = \frac{{f}''_{\Psi J} (x)}{{f}''_{\Psi K_0 } (x)}
= \frac{4\left( {\sqrt x + 1} \right)^2\left( {x + 1} \right)}{4x^2 + 5x^{3
/ 2} + 6x + 5\sqrt x + 4},
\]
\[
{g}'_{\Psi J\_\Psi K_0 } (x) = - \frac{2\left( {x - 1} \right)\left( {3x^2 +
4x^{3 / 2} + 10x + 4\sqrt x + 3} \right)}{\sqrt x \left( {4x^2 + 5x^{3 / 2}
+ 6x + 5\sqrt x + 4} \right)^2}
\begin{cases}
 { > 0,} & {0 < x < 1} \\
 { < 0,} & {x > 1} \\
\end{cases}
\]
\noindent and
\[
\beta _{\Psi J\_\Psi K_0 }  = \mathop {\sup }\limits_{x \in (0,\infty )}
g_{\Psi J\_\Psi K_0 } (x) = \mathop {\lim }\limits_{x \to 1} g_{\Psi J\_\Psi
K_0 } (x) = \frac{4}{3}.
\]

\smallskip
\noindent
\textbf{(x) For} $D_{\Psi K_0 } (P\vert \vert Q) \leqslant \frac{1}{3}D_{F\Delta } (P\vert \vert Q)$\textbf{: }We have
\[
g_{\Psi K_0 \_F\Delta } (x)  = \frac{{f}''_{\Psi K_0 } (x)}{{f}''_{F\Delta }
(x)} = \frac{4\sqrt x \left( {4x^2 + 5x^{3 / 2} + 6x + 5\sqrt x + 4}
\right)\left( {x + 1} \right)^3}{\left( {\begin{array}{l}
 15 + 90x + 257x^2 + 492x^3 + 257x^4 + \\
 + 90x^5 + 15x^6 + 30\sqrt x + 150x^{3 / 2} + \\
 + 364x^{5 / 2} + 364x^{7 / 2} + 150x^{9 / 2} + 30x^{11 / 2} \\
 \end{array}} \right)},
 \]
 \begin{align}
{g}'_{\Psi K_0 \_F\Delta } (x) &=
 - \frac{12\left( {x + 1} \right)^2\left( {x - 1} \right)^3}{\sqrt x \left(
{\begin{array}{l}
 15 + 90x + 257x^2 + 492x^3 + 257x^4 + \\
 + 90x^5 + 15x^6 + 30\sqrt x + 150x^{3 / 2} + \\
 + 364x^{5 / 2} + 364x^{7 / 2} + 150x^{9 / 2} + 30x^{11 / 2} \\
 \end{array}} \right)^2}\times\notag\\
& \hspace{20pt} \times \left( {\begin{array}{l}
 10 + 110x + 486x^2 + 740x^3 + 486x^4 + \\
 + 110x^5 + 10x^6 + \,25\sqrt x + 205x^{3 / 2} + \\
 + 586x^{5 / 2} + 586x^{7 / 2} + 205x^{9 / 2} + 25x^{11 / 2} \\
 \end{array}} \right)
 \begin{cases}
 { > 0,} & {0 < x < 1} \\
 { < 0,} & {x > 1} \\
\end{cases}\notag
\end{align}
\noindent and
\[
\beta _{\Psi K_0 \_F\Delta }  = \mathop {\sup }\limits_{x \in (0,\infty )}
g_{\Psi K_0 \_F\Delta } (x) = \mathop {\lim }\limits_{x \to 1} g_{\Psi K_0
\_F\Delta } (x) = \frac{1}{3}.
\]

\smallskip
\noindent
\textbf{(xi) For} $D_{\Psi T} (P\vert \vert Q) \leqslant \frac{3}{8}D_{FI} (P\vert \vert Q)$\textbf{: }We have
\[
g_{\Psi T\_FI} (x) = \frac{{f}''_{\Psi T} (x)}{{f}''_{FI} (x)} =
\frac{16\sqrt x \left( {x^2 + x + 1} \right)\left( {\sqrt x + 1}
\right)^2}{\left( {\begin{array}{l}
 15x^4 + 30x^{7 / 2} + 60x^3 + 90x^{5 / 2} + \\
 + 122x^2 + 90x^{3 / 2} + 60x + 30\sqrt x + 15 \\
 \end{array}} \right)},
 \]
 \begin{align}
{g}'_{\Psi T\_FI} (x) & =
 - \frac{8\left( {x - 1} \right)}{\sqrt x \left( {\begin{array}{l}
 15x^4 + 30x^{7 / 2} + 60x^3 + 90x^{5 / 2} + \\
 + 122x^2 + 90x^{3 / 2} + 60x + 30\sqrt x + 15 \\
 \end{array}} \right)^2}\times \notag\\
& \hspace{20pt} \times \left( {\begin{array}{l}
 15x^6 + 26x^{11 / 2} + 40x^5 + 86x^{9 / 2} + 9x^4 + \\
 + 9x^2 + 86x^{3 / 2} + 40x + 26\sqrt x + 15 + \\
 + (x^2 - 1)^2\sqrt x \left( {34x + 65\sqrt x + 34} \right) \\
 \end{array}} \right)
\begin{cases}
 { > 0,} & {0 < x < 1} \\
 { < 0,} & {x > 1} \\
\end{cases} \notag
\end{align}
\noindent and
\[
\beta _{\Psi T\_FI}  = \mathop {\sup }\limits_{x \in (0,\infty )} g_{\Psi
T\_FI} (x) = \mathop {\lim }\limits_{x \to 1} g_{\Psi T\_FI} (x) =
\frac{3}{8}.
\]

\smallskip
\noindent
\textbf{(xii) For} $D_{F\Delta } (P\vert \vert Q) \leqslant \frac{9}{8}D_{FI} (P\vert \vert Q)$\textbf{: }We have
\[
g_{F\Delta \_FI} (x) = \frac{{f}''_{F\Delta } (x)}{{f}''_{FI} (x)} =
\frac{\left( {\begin{array}{l}
 15 + 90x + 257x^2 + 492x^3 + 257x^4 + \\
 + 90x^5 + 15x^6 + 30\sqrt x + 150x^{3 / 2} + \\
 + 364x^{5 / 2} + 364x^{7 / 2} + 150x^{9 / 2} + 30x^{11 / 2} \\
 \end{array}} \right)}{\left( {x + 1} \right)^2\left( {\begin{array}{l}
 15x^4 + 30x^{7 / 2} + 60x^3 + 90x^{5 / 2} + \\
 + 122x^2 + 90x^{3 / 2} + 60x + 30\sqrt x + 15 \\
 \end{array}} \right)},
 \]
 \begin{align}
{g}'_{F\Delta \_FI} (x) & = - \frac{32x^{3 / 2}\left( {x - 1} \right)}{\left(
{x + 1} \right)^3\left( {\begin{array}{l}
 15x^4 + 30x^{7 / 2} + 60x^3 + 90x^{5 / 2} + \\
 + 122x^2 + 90x^{3 / 2} + 60x + 30\sqrt x + 15 \\
 \end{array}} \right)^2}\times\notag\\
& \hspace{20pt} \times \left( {\begin{array}{l}
 75x^5 + 300x^{9 / 2} + 675x^4 + 1200x^{7 / 2} + \\
 + 1682x^3 + 1928x^{5 / 2} + 1682x^2 + \\
 + 1200x^{3 / 2} + 675x + 300\sqrt x + 75 \\
 \end{array}} \right)
 \begin{cases}
 { > 0,} & {0 < x < 1} \\
 { < 0,} & {x > 1} \\
\end{cases} \notag
\end{align}
\noindent and
\[
\beta _{F\Delta \_FI}  = \mathop {\sup }\limits_{x \in (0,\infty )}
g_{F\Delta \_FI} (x) = \mathop {\lim }\limits_{x \to 1} g_{F\Delta \_FI} (x)
= \frac{9}{8}.
\]

\smallskip
\noindent
\textbf{(xiii) For} $D_{FI} (P\vert \vert Q) \leqslant \frac{16}{15}D_{Fh} (P\vert \vert Q)$\textbf{: }We have
\[
g_{FI\_Fh} (x) = \frac{{f}''_{FI} (x)}{{f}''_{Fh} (x)} = \frac{\left(
{\begin{array}{l}
 15x^4 + 30x^{7 / 2} + 60x^3 + 90x^{5 / 2} + \\
 + 122x^2 + 90x^{3 / 2} + 60x + 30\sqrt x + 15 \\
 \end{array}} \right)}{15 \left({x + 1} \right)^3\left( {\sqrt x + 1}
\right)^2},
\]
\[
{g}'_{FI\_Fh} (x) =
 - \frac{32x\left( {\sqrt x - 1} \right)\left( {2x + 3\sqrt x + 2}
\right)}{ 15 \left( {x + 1} \right)^4\left( {\sqrt x + 1} \right)^3}
\begin{cases}
 { > 0,} & {0 < x < 1} \\
 { < 0,} & {x > 1} \\
\end{cases}
\]
\noindent and
\[
\beta _{FI\_Fh}  = \mathop {\sup }\limits_{x \in (0,\infty )} g_{FI\_Fh} (x)
= \mathop {\lim }\limits_{x \to 1} g_{FI\_Fh} (x) = \frac{16}{15}.
\]

\smallskip
\noindent
\textbf{(xiv) For} $D_{Fh} (P\vert \vert Q) \leqslant \frac{15}{14}D_{FJ} (P\vert \vert Q)$\textbf{: }We have
\[
g_{Fh\_FJ} (x) = \frac{{f}''_{Fh} (x)}{{f}''_{FJ} (x)} = \frac{15\left(
{\sqrt x + 1} \right)^2\left( {x + 1} \right)^2}{\left( {\begin{array}{l}
 15x^3 + 30x^{5 / 2} + 45x^2 + \\
 + 44x^{3 / 2} + 45x + 30\sqrt x + 15 \\
 \end{array}} \right)},
 \]
 \[
{g}'_{Fh\_FJ} (x)  =
 - \frac{120\left( {x - 1} \right)\left( {x + 1} \right)\sqrt x \left( {3x +
4\sqrt x + 3} \right)}{\left( {\begin{array}{l}
 15x^3 + 30x^{5 / 2} + 45x^2 + \\
 + 44x^{3 / 2} + 45x + 30\sqrt x + 15 \\
 \end{array}} \right)^2}
\begin{cases}
 { > 0,} & {0 < x < 1} \\
 { < 0,} & {x > 1} \\
\end{cases}
\]
\noindent and
\[
\beta _{Fh\_FJ}  = \mathop {\sup }\limits_{x \in (0,\infty )} g_{Fh\_FJ} (x)
= \mathop {\lim }\limits_{x \to 1} g_{Fh\_FJ} (x) = \frac{15}{14}.
\]

\smallskip
\noindent
\textbf{(xv) For} $D_{FJ} (P\vert \vert Q) \leqslant \frac{7}{6}D_{FT} (P\vert \vert Q)$\textbf{: }We have
\[
g_{FJ\_FT} (x) = \frac{{f}''_{FJ} (x)}{{f}''_{FT} (x)} = \frac{\left( {x +
1} \right)\left( {\begin{array}{l}
 15x^3 + 30x^{5 / 2} + 45x^2 + \\
 + \,44x^{3 / 2} + 45x + 30\sqrt x + 15 \\
 \end{array}} \right)}{\left( {\begin{array}{l}
 15x^4 + 30x^{7 / 2} + 60x^3 + 58x^{5 / 2} + \\
 + 58x^2 + 58x^{3 / 2} + 60x + 30\sqrt x + 15 \\
 \end{array}} \right)},
 \]
 \begin{align}
{g}'_{FJ\_FT} (x) & =
 - \frac{8\left( {x - 1} \right)\sqrt x }{\left(
{\begin{array}{l}
 15x^4 + 30x^{7 / 2} + 60x^3 + 58x^{5 / 2} + \\
 + 58x^2 + 58x^{3 / 2} + 60x + 30\sqrt x + 15 \\
 \end{array}} \right)^2}\times \notag\\
& \hspace{20pt} \times \left( {\begin{array}{l}
 45x^4 + 180x^{7 / 2} + 360x^3 + 540x^{5 / 2} + \\
 + \,598x^2 + 540x^{3 / 2} + 360x + 180\sqrt x + 45 \\
 \end{array}} \right)
 \begin{cases}
 { > 0,} & {0 < x < 1} \\
 { < 0,} & {x > 1} \\
\end{cases} \notag
\end{align}
\noindent and
\[
\beta _{FJ\_FT}  = \mathop {\sup }\limits_{x \in (0,\infty )} g_{FJ\_FT} (x)
= \mathop {\lim }\limits_{x \to 1} g_{FJ\_FT} (x) = \frac{7}{6}.
\]

\smallskip
\noindent
\textbf{(xvi) For} $D_{FJ} (P\vert \vert Q) \leqslant \frac{7}{6}D_{FK_0 } (P\vert \vert Q)$\textbf{: }We have
\[
g_{FJ\_FK_0 } (x) = \frac{{f}''_{FJ} (x)}{{f}''_{FK_0 } (x)} = \frac{\left(
{\begin{array}{l}
 15x^3 + 30x^{5 / 2} + 45x^2 + \\
 + \,44x^{3 / 2} + 45x + 30\sqrt x + 15 \\
 \end{array}} \right)}{3\left( {5x^2 + 6x + 5} \right)\left( {\sqrt x + 1}
\right)^2},
\]
\[
{g}'_{FJ\_FK_0 } (x) =
 - \frac{4\left( {\sqrt x - 1} \right)\left( {\begin{array}{l}
 15x^3 + 30x^{5 / 2} + 65x^2 + \\
 + 68x^{3 / 2} + \,65x + 30\sqrt x + 15 \\
 \end{array}} \right)}{3\left( {5x^2 + 6x + 5} \right)^2
 \left( {\sqrt x + 1} \right)^3}
 \begin{cases}
 { > 0,} & {0 < x < 1} \\
 { < 0,} & {x > 1} \\
\end{cases}
\]
\noindent and
\[
\beta _{FJ\_FK_0 }  = \mathop {\sup }\limits_{x \in (0,\infty )} g_{FJ\_FK_0
} (x) = \mathop {\lim }\limits_{x \to 1} g_{FJ\_FK_0 } (x) = \frac{7}{6}.
\]

\smallskip
\noindent
\textbf{(xvii) For} $D_{FT} (P\vert \vert Q) \leqslant 2\,D_{F\Psi } (P\vert \vert Q)$\textbf{: }We have
\[
g_{FT\_F\Psi } (x)  = \frac{{f}''_{FT} (x)}{{f}''_{F\Psi } (x)} =
\frac{\left( {\begin{array}{l}
 15x^4 + 30x^{7 / 2} + 60x^3 + 58x^{5 / 2} + \\
 + \,58x^2 + 58x^{3 / 2} + 60x + 30\sqrt x + 15 \\
 \end{array}} \right)}{\left( {x + 1} \right)\left( {\begin{array}{l}
 15x^3 + 14x^{5 / 2} + 13x^2 + \\
 + \,12x^{3 / 2} + 13x + 14\sqrt x + 15 \\
 \end{array}} \right)},
 \]
 \begin{align}
{g}'_{FT\_F\Psi } (x) & = - \frac{8\left( {x - 1} \right)}{\sqrt x \left( {x +
1} \right)^2\left( {\begin{array}{l}
 15x^3 + 14x^{5 / 2} + 13x^2 + \\
 + 12x^{3 / 2} + 13x + 14\sqrt x + 15 \\
 \end{array}} \right)^2}\times \notag\\
& \hspace{20pt} \times \left( {\begin{array}{l}
 15x^6 + 60x^{11 / 2} + 105x^5 + 184x^{9 / 2} + \\
 + 265x^4 + 380x^{7 / 2} + 382x^3 + 380x^{5 / 2} + \\
 + 265x^2 + 184x^{3 / 2} + 105x + 60\sqrt x + 15 \\
 \end{array}} \right)
 \begin{cases}
 { > 0,} & {0 < x < 1} \\
 { < 0,} & {x > 1} \\
\end{cases}\notag
\end{align}
\noindent and
\[
\beta _{FT\_F\Psi }  = \mathop {\sup }\limits_{x \in (0,\infty )}
g_{FT\_F\Psi } (x) = \mathop {\lim }\limits_{x \to 1} g_{FT\_F\Psi } (x) =
2.
\]

\smallskip
\noindent
\textbf{(xviii) For} $D_{FK_0 } (P\vert \vert Q) \leqslant 2\,D_{F\Psi } (P\vert \vert Q)$\textbf{: }We have
\[
g_{FK_0 \_F\Psi } (x) = \frac{{f}''_{FK_0 } (x)}{{f}''_{F\Psi } (x)} =
\frac{3\left( {\sqrt x + 1} \right)^2\left( {5x^2 + 6x + 5} \right)}{\left(
{\begin{array}{l}
 15x^3 + 14x^{5 / 2} + 13x^2 + \\
 + \,12x^{3 / 2} + 13x + 14\sqrt x + 15 \\
 \end{array}} \right)},
 \]
 \begin{align}
{g}'_{FK_0 \_F\Psi } (x) & = - \frac{12\left( {x - 1} \right)}{\sqrt x \left(
{\begin{array}{l}
 15x^3 + 14x^{5 / 2} + 13x^2 + \\
 + 12x^{3 / 2} + 13x + 14\sqrt x + 15 \\
 \end{array}} \right)^2}\times\notag\\
& \hspace{20pt} \times \left( {\begin{array}{l}
 10x^4 + 25x^{7 / 2} + 58x^3 + 87x^{5 / 2} + \\
 + \,120x^2 + 87x^{3 / 2} + 58x + 25\sqrt x + 10 \\
 \end{array}} \right)
 \begin{cases}
 { > 0,} & {0 < x < 1} \\
 { < 0,} & {x > 1} \\
\end{cases}\notag
\end{align}
\noindent and
\[
\beta _{FK_0 \_F\Psi }  = \mathop {\sup }\limits_{x \in (0,\infty )} g_{FK_0
\_F\Psi } (x) = \mathop {\lim }\limits_{x \to 1} g_{FK_0 \_F\Psi } (x) = 2.
\]

Combining the parts (i)-(xix) we complete the proof of the theorem.
\end{proof}

\subsection{Unified Inequalities}

In \cite{tan4}, the author studied the following inequalities based on the first part of expression (\ref{eq15}):
\begin{align}
D_{I\Delta } \leqslant & \frac{2}{3}D_{h\Delta } \leqslant
\frac{1}{2}D_{J\Delta } \leqslant \frac{1}{3}D_{T\Delta } \leqslant D_{TJ}
\leqslant \frac{2}{3}D_{Th} \leqslant\notag\\
\label{eq25}
 & \leqslant 2D_{Jh} \leqslant \frac{1}{6}D_{\Psi \Delta } \leqslant
\frac{1}{5}D_{\Psi I} \leqslant \frac{2}{9}D_{\Psi h} \leqslant
\frac{1}{4}D_{\Psi J} \leqslant \frac{1}{3}D_{\Psi T} .
\end{align}

\noindent and
\begin{equation}
\label{eq26}
\frac{2}{3}D_{h\Delta } \leqslant 2D_{hI} \leqslant D_{TJ} .
\end{equation}

Combining the above inequalities given in (\ref{eq21}), (\ref{eq25}) and (\ref{eq26}), we get following unified result:
\begin{align}
 & D_{I\Delta } \leqslant \frac{2}{3} D_{h\Delta } \leqslant
\frac{1}{2}D_{J\Delta } \leqslant \left( {\begin{array}{l}
 \frac{1}{3}D_{T\Delta } \\
 2D_{hI} \\
 \end{array}} \right) \leqslant \left( {\begin{array}{l}
 \frac{1}{3}D_{K_0 \Delta } \\
 D_{TJ} \leqslant \frac{2}{3}D_{Th} \leqslant 2D_{Jh} \\
 \end{array}} \right) \leqslant\notag\\
 & \leqslant \frac{1}{2}D_{K_0 I} \leqslant \frac{2}{3}D_{K_0 h} \leqslant
\left( {\begin{array}{l}
 D_{K_0 J} \\
 \frac{1}{6}D_{\Psi \Delta } \\
 \end{array}} \right) \leqslant \frac{1}{5}D_{\Psi I} \leqslant
\frac{2}{9}D_{\Psi h} \leqslant \frac{1}{4}D_{\Psi J} \leqslant\notag\\
\label{eq27}
& \leqslant \frac{1}{3}\left(
{\begin{array}{l}
 D_{\Psi T} \\
 D_{\Psi K_0 } \leqslant \frac{1}{9}D_{F\Delta } \\
 \end{array}} \right) \leqslant \frac{1}{8}D_{FI} \leqslant \frac{2}{15}D_{Fh} \leqslant \frac{1}{7}D_{FJ} \leqslant
\frac{1}{6}\left( {\begin{array}{l}
 D_{FT} \\
 D_{FK_0 } \\
 \end{array}} \right) \leqslant \frac{1}{3}D_{F\Psi }.
\end{align}

\section{Inequalities among New Divergence Measures}

From the inequalities appearing in (\ref{eq24}), we observe that the measures (\ref{eq6})-(\ref{eq11}) bear the following relation
\[
B_2  (P\vert \vert Q) \leqslant \frac{1}{4}B_4 (P\vert \vert Q) \leqslant
\frac{1}{2}B_3 (P\vert \vert Q) \leqslant \frac{1}{16}B_6 (P\vert \vert Q) \leqslant \frac{1}{2}B_5 (P\vert \vert Q) \leqslant \frac{1}{4}B_1 (P\vert \vert Q),
\]

\noindent i.e.,
\begin{align}
D_{h\Delta } & (P\vert \vert Q) \leqslant \frac{1}{2}D_{K_0 \Delta } (P\vert
\vert Q) \leqslant D_{K_0 h} (P\vert \vert Q) \leqslant \notag\\
\label{eq28}
&\leqslant \frac{1}{4}D_{\Psi \Delta } (P\vert \vert Q) \leqslant \frac{1}{2}D_{\Psi K_0 } (P\vert \vert Q) \leqslant \frac{1}{4}D_{FK_0 }
(P\vert \vert Q).
\end{align}

The expression (\ref{eq28}) admits 15 nonnegative differences. These are as follows:
\begin{align}
L_1 (P\vert \vert Q) & = L_2 (P\vert \vert Q) = \frac{1}{2}L_3 (P\vert \vert
Q) = \frac{1}{2}D_{K_0 \Delta } (P\vert \vert Q) - D_{h\Delta } (P\vert \vert Q) \notag\\
& = \frac{1}{16}\sum\limits_{i = 1}^n {\frac{\left( {\sqrt {p_i } - \sqrt
{q_i } } \right)^6}{\sqrt {p_i q_i } \left( {p_i + q_i } \right)},}\notag\\
L_4 (P\vert \vert Q) & = \frac{1}{4}D_{\Psi \Delta } (P\vert \vert Q) - D_{K_0
h} (P\vert \vert Q) = \frac{1}{64}\sum\limits_{i = 1}^n {\frac{\left( {\sqrt
{p_i } - \sqrt {q_i } } \right)^8}{p_i q_i \left( {p_i + q_i } \right)}},\notag\\
L_5 (P\vert \vert Q) & = \frac{1}{4}D_{\Psi \Delta } (P\vert \vert Q) -
\frac{1}{2}D_{K_0 \Delta } (P\vert \vert Q)  = \frac{1}{64}\sum\limits_{i =
1}^n {\frac{\left( {p_i - q_i } \right)^2\left( {\sqrt {p_i } - \sqrt {q_i }
} \right)^4}{p_i q_i \left( {p_i + q_i } \right)}},\notag\\
L_6 (P\vert \vert Q) & = \frac{1}{4}D_{\Psi \Delta } (P\vert \vert Q) -
D_{h\Delta } (P\vert \vert Q) \notag\\
& = \frac{1}{64}\sum\limits_{i = 1}^n
{\frac{\left( {p_i + 6\sqrt {p_i q_i } + q_i } \right)\left( {\sqrt {p_i } -
\sqrt {q_i } } \right)^2}{p_i q_i \left( {p_i + q_i } \right)}},\notag\\
L_7 (P\vert \vert Q) & = \frac{1}{2}D_{\Psi K_0 } (P\vert \vert Q) -
\frac{1}{4}D_{\Psi \Delta } (P\vert \vert Q)\notag\\
& = \frac{1}{64}\sum\limits_{i = 1}^n {\frac{\left[ {2\left( {p_i + q_i }
\right) + \left( {\sqrt {p_i } - \sqrt {q_i } } \right)^2} \right]\left(
{p_i - q_i } \right)^2\left( {\sqrt {p_i } - \sqrt {q_i } } \right)^2}{p_i
q_i \left( {p_i + q_i } \right)}},\notag
\end{align}
\begin{align}
L_8 (P\vert \vert Q) & = \frac{1}{2}D_{\Psi K_0 } (P\vert \vert Q) - D_{K_0 h}
(P\vert \vert Q) \notag\\
&= \frac{1}{16}\sum\limits_{i = 1}^n {\frac{\left( {p_i +
q_i } \right)\left( {\sqrt {p_i } - \sqrt {q_i } } \right)^4}{p_i q_i }},\notag\\
L_9 (P\vert \vert Q) & = \frac{1}{2}D_{\Psi K_0 } (P\vert \vert Q) -
\frac{1}{2}D_{K_0 \Delta } (P\vert \vert Q)\notag\\
& = \frac{1}{16}\sum\limits_{i = 1}^n {\frac{\left( {p_i - q_i }
\right)^2\left[ {\left( {\sqrt {p_i } - \sqrt {q_i } } \right)^2 + \sqrt
{p_i q_i } } \right]\left( {\sqrt {p_i } - \sqrt {q_i } } \right)^2}{p_i q_i
\left( {p_i + q_i } \right)}},\notag\\
L_{10} (P\vert \vert Q) & = \frac{1}{2}D_{\Psi K_0 } (P\vert \vert Q) -
D_{h\Delta } (P\vert \vert Q) \notag\\
& = \frac{1}{16}\sum\limits_{i = 1}^n {\frac{\left[ {\sqrt {p_i q_i } \left(
{p_i + q_i } \right) + \left( {\sqrt {p_i } - \sqrt {q_i } } \right)^2}
\right]\left( {\sqrt {p_i } - \sqrt {q_i } } \right)^4}{p_i q_i \left( {p_i
+ q_i } \right)}},\notag\\
L_{11} (P\vert \vert Q) &= \frac{1}{4}D_{FK_0 } (P\vert \vert Q) -
\frac{1}{2}D_{\Psi K_0 } (P\vert \vert Q) \notag\\
& = \frac{1}{32}\sum\limits_{i = 1}^n {\frac{\left( {p_i + q_i }
\right)\left( {p_i - q_i } \right)^2\left( {\sqrt {p_i } - \sqrt {q_i } }
\right)^2}{p_i^{3 / 2} q_i^{3 / 2} }},\notag\\
L_{12} (P\vert \vert Q) & = \frac{1}{4}D_{FK_0 } (P\vert \vert Q) -
\frac{1}{4}D_{\Psi \Delta } (P\vert \vert Q)\notag\\
& = \frac{1}{64}\sum\limits_{i = 1}^n {\frac{\left[ {p_i + q_i + \sqrt {p_i
q_i } + \left( {\sqrt {p_i } - \sqrt {q_i } } \right)^2} \right]\left( {p_i
- q_i } \right)^4}{p_i^{3 / 2} q_i^{3 / 2} \left( {p_i + q_i } \right)}},\notag\\
L_{13} (P\vert \vert Q) & = \frac{1}{4}D_{FK_0 } (P\vert \vert Q) - D_{K_0 h}
(P\vert \vert Q)\notag\\
& = \frac{1}{64}\sum\limits_{i = 1}^n {\frac{\left( {p_i + q_i }
\right)\left( {p_i + 4\sqrt {p_i q_i } + q_i } \right)\left( {\sqrt {p_i } -
\sqrt {q_i } } \right)^4}{p_i^{3 / 2} q_i^{3 / 2} \left( {p_i + q_i }
\right)}},\notag\\
L_{14} (P\vert \vert Q) & = \frac{1}{4}D_{FK_0 } (P\vert \vert Q) -
\frac{1}{2}D_{K_0 \Delta } (P\vert \vert Q)\notag\\
& = \frac{1}{32}\sum\limits_{i = 1}^n {\frac{\left[ {p_i^2 + q_i^2 + 2\sqrt
{p_i q_i } \left( {p_i + q_i } \right)} \right]\left( {\sqrt {p_i } + \sqrt
{q_i } } \right)^2\left( {\sqrt {p_i } - \sqrt {q_i } } \right)^4}{p_i^{3 /
2} q_i^{3 / 2} \left( {p_i + q_i } \right)}}\notag
\intertext{and}
L_{15} (P\vert \vert Q) & = \frac{1}{4}D_{FK_0 } (P\vert \vert Q) - D_{h\Delta
} (P\vert \vert Q)\notag\\
& = \frac{1}{32}\sum\limits_{i = 1}^n {\frac{\left( {\sqrt {p_i } + \sqrt
{q_i } } \right)^2\left( {\sqrt {p_i } - \sqrt {q_i } } \right)^4}{p_i^{3 /
2} q_i^{3 / 2} \left( {p_i + q_i } \right)}} \times\notag\\
& \hspace{20pt} \times \left[ {p_i^3 + q_i^3 + 4\sqrt {p_i q_i } \left( {p_i^2 + q_i^2 }
\right) + 7p_i q_i \left( {p_i + q_i } \right)} \right].\notag
\end{align}

\begin{theorem} The following inequalities hold:
\begin{align}
\label{eq29}
& L_1 (P\vert \vert Q) \leqslant \frac{1}{2}L_6 (P\vert \vert Q) \leqslant
D_{K_0 T} (P\vert \vert Q) \leqslant L_5 (P\vert \vert Q).\\
\intertext{and}
\label{eq30}
& L_7 (P\vert \vert Q) \leqslant \begin{cases}
 {L_{8} (P\vert \vert Q) \leqslant \frac{1}{3}\left(
{{\begin{array}{*{20}c}
 {L_{12} (P\vert \vert Q)} \\
 {L_{13} (P\vert \vert Q)} \\
\end{array} }} \right) \leqslant \frac{1}{2}L_{11} (P\vert \vert Q)} \\
 {\left( {{\begin{array}{*{20}c}
 {L_8 (P\vert \vert Q)} \\
 {L_9 (P\vert \vert Q)} \\
\end{array} }} \right) \leqslant \frac{1}{3}\begin{cases}
 {L_{14} (P\vert \vert Q) \leqslant \frac{1}{2}L_{11} (P\vert \vert Q)} \\
 {L_{15} (P\vert \vert Q)}\\
\end{cases}}.\\
\end{cases}.
\end{align}
\end{theorem}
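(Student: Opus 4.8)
The plan is to handle Theorem 4.1 exactly as Theorem 3.1 was handled: reduce every inequality in (\ref{eq29}) and (\ref{eq30}) to a two-term comparison $L_a \leqslant c\,L_b$ and dispatch each one with Lemma 1.2. First I would record that each $L_i$ is a Csisz\'ar $f$-divergence, $L_i(P\Vert Q)=\sum_i q_i f_{L_i}(p_i/q_i)$, whose generating function is obtained by performing the same subtraction that defines $L_i$ on the generating functions already assembled in Section 2. For instance $f_{L_1}(x)=\tfrac12 f_{K_0\Delta}(x)-f_{h\Delta}(x)=\frac{(\sqrt x-1)^6}{16\sqrt x\,(x+1)}$ and $f_{L_6}(x)=\tfrac14 f_{\Psi\Delta}(x)-f_{h\Delta}(x)=\frac{(x+6\sqrt x+1)(\sqrt x-1)^6}{64x(x+1)}$, where $f_{\Psi\Delta}(x)=\frac{(x-1)^4}{16x(x+1)}$ and $f_{h\Delta}(x)=\frac{(\sqrt x-1)^4}{4(x+1)}$. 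In every case $f_{L_i}(1)=0$ and $f_{L_i}$ factors as $(\sqrt x-1)^{2k}$ times a rational function with positive coefficients, so that $f''_{L_i}\geqslant 0$; by Lemma 1.1 each $L_i$ is nonnegative and convex. This step also pins down the order of vanishing at $x=1$, which is precisely what makes the comparison constants finite.

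Then, for each consecutive pair I would form the ratio $g_{ab}(x)=f''_{L_a}(x)/f''_{L_b}(x)$ and differentiate. Matching every case of Theorem 3.1, I expect $g'_{ab}(x)$ to factor as $-(x-1)$ (or $-(\sqrt x-1)$) times $x^{-1/2}$ times a polynomial in $\sqrt x$ with strictly positive coefficients over a square, so that $g'_{ab}>0$ on $(0,1)$ and $g'_{ab}<0$ on $(1,\infty)$. Hence $g_{ab}$ increases on $(0,1)$, decreases on $(1,\infty)$, and $\beta_{ab}=\sup_{x>0}g_{ab}(x)=\lim_{x\to1}g_{ab}(x)$. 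Because numerator and denominator vanish to the same even order at $x=1$, this limit is the ratio of their leading Taylor coefficients; for example both $f''_{L_1}$ and $f''_{L_6}$ vanish like $(x-1)^4$ with coefficient ratio exactly $\tfrac12$, giving $\beta=\tfrac12$. Lemma 1.2 then yields $L_a=C_{f_{L_a}}\leqslant\beta_{ab}\,C_{f_{L_b}}=c\,L_b$ with $c=\beta_{ab}$, which is the claimed constant.

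Finally I would run this through the full list of pairs needed to chain the two displays: for (\ref{eq29}) the pairs $L_1\leqslant\tfrac12 L_6$, $L_6\leqslant 2\,D_{K_0T}$ and $D_{K_0T}\leqslant L_5$; for the branching (\ref{eq30}) the pairs $L_7\leqslant L_8$, $L_7\leqslant L_9$, $L_8\leqslant\tfrac13 L_{12}$, $L_8\leqslant\tfrac13 L_{13}$, $L_8\leqslant\tfrac13 L_{14}$, $L_8\leqslant\tfrac13 L_{15}$, $L_9\leqslant\tfrac13 L_{14}$, $L_9\leqslant\tfrac13 L_{15}$, $L_{12}\leqslant\tfrac32 L_{11}$, $L_{13}\leqslant\tfrac32 L_{11}$ and $L_{14}\leqslant\tfrac32 L_{11}$; transitivity then assembles (\ref{eq29}) and (\ref{eq30}). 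Although $D_{K_0T}$ carries the logarithmic term $f_T$, its second derivative $f''_{K_0T}$ is rational (the logarithm disappears upon differentiating twice, as recorded in part (i) of Section 2), so every ratio $g_{ab}$ is a genuine rational function of $\sqrt x$ and the monotonicity analysis is uniform across all cases.

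The main obstacle is purely computational. For the comparisons involving $L_{11}$ through $L_{15}$ the relevant second derivatives are quotients of polynomials of degree up to about six in $x$ (twelve in $\sqrt x$), and the entire argument rests on exhibiting, after differentiation, the clean structure $g'_{ab}(x)=-(x-1)\times(\text{polynomial with positive coefficients})$ divided by a manifestly positive factor. Producing those factorizations and certifying that the residual polynomials are positive for all $x>0$ — rather than any conceptual difficulty — is where the labor lies; once each $\beta_{ab}$ is confirmed to equal the stated constant, Lemma 1.2 together with transitivity completes the proof.
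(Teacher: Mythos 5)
Your overall strategy --- writing each $L_i$ as a Csisz\'ar $f$-divergence, forming ratios of second derivatives, locating the supremum at $x\to 1$, and invoking Lemma 1.2 plus transitivity --- is exactly the paper's, and for the branching display (\ref{eq30}) your eleven pairwise comparisons coincide precisely with the paper's parts (iv)--(xiv), with the same constants. One small caveat there: your expectation that every $g'_{ab}$ factors as $-(x-1)$ times a polynomial with \emph{positive} coefficients is not uniformly borne out (in the paper's comparison of $L_9$ with $L_{15}$ the accompanying polynomial has several negative coefficients, e.g.\ $-197x^{11/2}$ and $-223x^{9/2}$), so positivity on $(0,\infty)$ must then be argued separately; this is a repairable nuisance, not a conceptual flaw.

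The genuine gap is in your treatment of (\ref{eq29}): the reduction $L_6\leqslant 2\,D_{K_0 T}$, which you need for the middle link $\tfrac12 L_6\leqslant D_{K_0 T}$, is false, and Lemma 1.2 cannot certify it. Indeed $f_{L_6}(x)=\frac{(\sqrt x-1)^6(x+6\sqrt x+1)}{64x(x+1)}$ and $f_{K_0 T}(x)=\frac{(x-1)^2}{8\sqrt x}-\frac{x+1}{2}\ln\frac{x+1}{2\sqrt x}$ both vanish to sixth order at $x=1$, with leading coefficients $\tfrac{1}{1024}$ and $\tfrac{1}{3072}$ respectively (set $x=1+\epsilon$), so $\sup_{x>0}f''_{L_6}/f''_{K_0 T}\geqslant\lim_{x\to 1}f''_{L_6}/f''_{K_0 T}=3$, and the best constant obtainable is $L_6\leqslant 3\,D_{K_0 T}$. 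Worse, the inequality $\tfrac12 L_6\leqslant D_{K_0 T}$ itself fails: at $x=4$ one has $\tfrac12 f_{L_6}(4)=\tfrac{17}{2560}\approx 0.00664$ while $f_{K_0 T}(4)=\tfrac{9}{16}-\tfrac52\ln\tfrac54\approx 0.00464$, and since both generators vanish together with their first derivatives at $x=1$, the distributions $P=(4\delta,\,1-4\delta)$, $Q=(\delta,\,1-\delta)$ with $\delta\to 0$ give $\tfrac12 L_6(P\Vert Q)\big/D_{K_0 T}(P\Vert Q)\to\tfrac12 f_{L_6}(4)\big/f_{K_0 T}(4)>1$, a genuine counterexample. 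This defect is inherited from the printed statement, not invented by you: the paper's own proof never establishes that middle link either --- its parts (i)--(iii) prove $L_1\leqslant\tfrac12 L_6$, $L_1\leqslant\tfrac32 D_{K_0 T}$ and $D_{K_0 T}\leqslant\tfrac23 L_5$, which chain only to $L_1\leqslant\tfrac32 D_{K_0 T}\leqslant L_5$ together with the separate bound $L_1\leqslant\tfrac12 L_6$. So for (\ref{eq29}) you should either prove those substitute inequalities, as the paper does, or flag the chain as misstated; your literal pairwise reduction cannot be completed for any admissible choice of constants.
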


\begin{proof} We shall prove the above theorem in parts. Following the similar lines of part (i) of Theorem 3.1, it is sufficient to write in each case, the expressions similar to (\ref{eq22})-(\ref{eq24}). The rest part of the proof follows by the application of Lemma 1.2.

\bigskip
\noindent
\textbf{(i) For }$L_1 (P\vert \vert Q) \leqslant \frac{1}{6}L_6 (P\vert \vert Q)$\textbf{: }We have
\[
g_{L_1 \_L_6 } (x) = \frac{{f}''_{L_1 } (x)}{{f}''_{L_6 } (x)} = \frac{\sqrt
x \left( {\begin{array}{l}
 3x^3 + 12x^{5 / 2} + 25x^2 + \\
 + 40x^{3 / 2} + 25x + 12\sqrt x + 3 \\
 \end{array}} \right)}{2\left( {\begin{array}{l}
 x^4 + 4x^{7 / 2} + 13x^3 + 24x^{5 / 2} + \\
 + 36x^2 + 24x^{3 / 2} + 13x + 4\sqrt x + 1 \\
 \end{array}} \right)},
 \]
 \begin{align}
{g}'_{L_1 \_L_6 } (x) & = - \frac{3\left( {x - 1} \right)\left( {x + 1}
\right)^2}{4\sqrt x \left( {\begin{array}{l}
 x^4 + 4x^{7 / 2} + 13x^3 + 24x^{5 / 2} + \\
 + 36x^2 + 24x^{3 / 2} + 13x + 4\sqrt x + 1 \\
 \end{array}} \right)^2}\times\notag\\
 &\hspace{20pt} \times \left( {\begin{array}{l}
 x^4 + 8x^{7 / 2} + 27x^3 + 64x^{5 / 2} + \\
 + 80x^2 + 64x^{3 / 2} + 27x + 8\sqrt x + 1 \\
 \end{array}} \right)
\begin{cases}
 { > 0,} & {0 < x < 1} \\
 { < 0,} & {x > 1} \\
\end{cases}\notag
\end{align}
\noindent and
\[
\beta _{L_1 \_L_6 }  = \mathop {\sup }\limits_{x \in (0,\infty )} g_{L_1
\_L_6 } (x) = \mathop {\lim }\limits_{x \to 1} g_{L_1 \_L_6 } (x) =
\frac{1}{2}.
\]

\smallskip
\noindent
\textbf{(ii) For }$L_1 (P\vert \vert Q) \leqslant \frac{3}{2}D_{K_0 T} (P\vert
\vert Q)$\textbf{: }We have
\[
g_{L_1 \_D_{K_0 T} } (x) = \frac{{f}''_{L_1 } (x)}{{f}''_{D_{K_0 T} } (x)} =
\frac{\left( {\begin{array}{l}
 3x^3 + 12x^{5 / 2} + 25x^2 + \\
 + 40x^{3 / 2} + 25x + 12\sqrt x + 3 \\
 \end{array}} \right)}{2\left( {x + 1} \right)^2\left( {3x + 4\sqrt x + 3}
\right)^2},
\]
\[
{g}'_{L_1 \_D_{K_0 T} } (x)  = - \frac{2(x - 1)\left( {\begin{array}{l}
 3x^3 + 12x^{5 / 2} + 35x^2 + \\
 + 40x^{3 / 2} + 35x + 12\sqrt x + 3 \\
 \end{array}} \right)}{2\sqrt x \left( {x + 1} \right)^3\left( {3x + 4\sqrt
x + 3} \right)^2}
\begin{cases}
 { > 0,} & {0 < x < 1} \\
 { < 0,} & {x > 1} \\
\end{cases}
\]
\noindent and
\[
\beta_{L_1 \_D_{K_0 T} }  = \mathop {\sup }\limits_{x \in (0,\infty )}
g_{L_1 \_D_{K_0 T} } (x) = \mathop {\lim }\limits_{x \to 1} g_{L_1 \_D_{K_0
T} } (x) = \frac{3}{2}.
\]

\smallskip
\noindent
\textbf{(iii) For }$K_0 T$\textbf{: }We have
\[
g_{D_{K_0 T} \_L_5 } (x) = \frac{{f}''_{K_0 T} (x)}{{f}''_{L_5 } (x)} =
\frac{2\sqrt x \left( {x + 1} \right)^2\left( {3x + 4\sqrt x + 3}
\right)}{\left( {\begin{array}{l}
 2x^4 + 5x^{7 / 2} + 14x^3 + 23x^{5 / 2} + \\
 + 32x^2 + 23x^{3 / 2} + 14x + 5\sqrt x + 2 \\
 \end{array}} \right)},
 \]
 \begin{align}
{g}'_{D_{K_0 T} \_L_5 } (x) & = - \frac{2\left( {x - 1} \right)^3\left( {x +
1} \right)\left( {x + \sqrt x + 1} \right)}{\sqrt x \left( {\begin{array}{l}
 2x^4 + 5x^{7 / 2} + 14x^3 + 23x^{5 / 2} + \\
 + 32x^2 + 23x^{3 / 2} + 14x + 5\sqrt x + 2 \\
 \end{array}} \right)^2}\times \notag\\
& \hspace{20pt} \times  \left({\begin{array}{l}
 3x^2 + 5x^{3 / 2} + 14x + 5\sqrt x + 3 \\
 \end{array}} \right)\begin{cases}
 {> 0,} & {0 < x < 1} \\
 {< 0,} & {x > 1} \\
\end{cases}\notag
\end{align}
\noindent and
\[
\beta _{D_{K_0 T} \_L_5 }  = \mathop {\sup }\limits_{x \in (0,\infty )}
g_{D_{K_0 T} \_L_5 } (x) = \mathop {\lim }\limits_{x \to 1} g_{D_{K_0 T}
\_L_5 } (x) = \frac{2}{3}.
\]

Combining the parts (i)-(iii), we get the proof of (\ref{eq28}).

\smallskip
\noindent
\textbf{(iv) For }$L_7 (P\vert \vert Q) \leqslant L_8 (P\vert \vert Q)$\textbf{: }We have
\[
g_{L_7 \_L_8 } (x) = \frac{{f}''_{L_7 } (x)}{{f}''_{L_8 } (x)} =
\frac{\left( {\begin{array}{l}
 3x^5 + 3x^{9 / 2} + 12x^4 + 10x^{7 / 2} + 17x^3 + \\
 + 6x^{5 / 2} + 17x^2 + 10x^{3 / 2} + 12x + 3\sqrt x + 3 \\
 \end{array}} \right)}{2\left( {x + 1} \right)^3\left( {2x^2 + x^{3 / 2} +
\sqrt x + 2} \right)},
\]
\begin{align}
{g}'_{L_7 \_L_8 } (x) & = - \frac{3\left( {x - 1} \right)\left( {\sqrt x - 1}
\right)^2}{4\sqrt x \left( {x + 1} \right)^4\left( {2x^2 + x^{3
/ 2} + \sqrt x + 2} \right)^2}\times \notag\\
& \hspace{30pt} \times\left( {\begin{array}{l}
 x^5 + 6x^{9 / 2} + 16x^4 + 34x^{7 / 2} + \\
 + 35x^3 + 40x^{5 / 2} + 35x^2 + \\
 + 34x^{3 / 2} + 16x + 6\sqrt x + 1 \\
 \end{array}} \right)
\begin{cases}
 { > 0,} & {0 < x < 1} \\
 { < 0,} & {x > 1} \\
\end{cases}\notag
\end{align}
\noindent and
\[
\beta _{L_7 \_L_8 } = \mathop {\sup }\limits_{x \in (0,\infty )} g_{L_7
\_L_8 } (x) = \mathop {\lim }\limits_{x \to 1} g_{L_7 \_L_8 } (x) = 1.
\]

\smallskip
\noindent
\textbf{(v) For }$L_7 (P\vert \vert Q) \leqslant L_9 (P\vert \vert Q)$\textbf{: }We have
\[
g_{L_7 \_L_9 } (x) = \frac{{f}''_{L_7 } (x)}{{f}''_{L_9 } (x)} =
\frac{2\left( {\begin{array}{l}
 3x^5 + 3x^{9 / 2} + 12x^4 + 10x^{7 / 2} + 17x^3 + \\
 + 6x^{5 / 2} + 17x^2 + 10x^{3 / 2} + 12x + 3\sqrt x + 3 \\
 \end{array}} \right)}{\left( {\begin{array}{l}
 8x^5 + 7x^{9 / 2} + 30x^4 + 20x^{7 / 2} + 31x^3 + 31x^2 + \\
 + 3x^2\left( {\sqrt x - 1} \right)^2 + 20x^{3 / 2} + 30x + 7\sqrt x + 8 \\
 \end{array}} \right)},
 \]
 \begin{align}
{g}'_{L_7 \_L_9 } (x) & = - \frac{3\left( {x - 1} \right)\left( {x + 1}
\right)^2}{\sqrt x \left( {\begin{array}{l}
 8x^5 + 7x^{9 / 2} + 30x^4 + 20x^{7 / 2} + 31x^3 + 31x^2 + \\
 + 3x^2\left( {\sqrt x - 1} \right)^2 + 20x^{3 / 2} + 30x + 7\sqrt x + 8 \\
 \end{array}} \right)^2}\times\notag\\
 & \hspace{20pt} \times \left( {\begin{array}{l}
 x^6 + 4x^{11 / 2} + 17x^5 + 48x^{9 / 2} + 131x^4 + \\
 + 172x^{7 / 2} + 214x^3 + 172x^{5 / 2} + \\
 + 131x^2 + 48x^{3 / 2} + 17x + 4\sqrt x + 1 \\
 \end{array}} \right)
\begin{cases}
 { > 0,} & {0 < x < 1} \\
 { < 0,} & {x > 1} \\
\end{cases}\notag
\end{align}
\noindent and
\[
\beta _{L_7 \_L_9 }  = \mathop {\sup }\limits_{x \in (0,\infty )} g_{L_7 L_9
} (x) = \mathop {\lim }\limits_{x \to 1} g_{L_7 L_9 } (x) = 1.
\]

\smallskip
\noindent
\textbf{(vi) For }$L_8 (P\vert \vert Q) \leqslant \frac{1}{3}L_{12} (P\vert \vert
Q)$\textbf{: }We have
\[
g_{L_8 \_L_{12} } (x) = \frac{{f}''_{L_8 } (x)}{{f}''_{L_{12} } (x)} =
\frac{8\sqrt x \left( {2x^2 + x^{3 / 2} + \sqrt x + 2} \right)\left( {x + 1}
\right)^3}{\left( {\sqrt x + 1} \right)^2\left( {\begin{array}{l}
 2\left( {\sqrt x - 1} \right)^2\left( {x^4 + 5x^3 + 12x^2 + 5x + 1} \right)
+ \\
 + \left( {x + 1} \right)\left( {13x^4 + 38x^3 + 42x^2 + 38x + 13} \right)
\\
 \end{array}} \right)},
 \]
 \begin{align}
{g}'_{L_8 \_L_{12} } (x) & = - \frac{24\left( {\sqrt x - 1} \right)\left( {x +
1} \right)^2}{\sqrt x \left( {\sqrt x + 1} \right)^3\left(
{\begin{array}{l}
 2\left( {\sqrt x - 1} \right)^2\left( {x^4 + 5x^3 + 12x^2 + 5x + 1} \right)
+ \\
 + \left( {x + 1} \right)\left( {13x^4 + 38x^3 + 42x^2 + 38x + 13} \right)
\\
 \end{array}} \right)^2}\notag\\
& \hspace{20pt} \times \left( {\begin{array}{l}
 5x^8 + 5x^{15 / 2} + 21x^7 + 19x^{13 / 2} + 52x^6 + \\
 \; + 49x^{11 / 2} + 155x^5 + 87x^{9 / 2} + 174x^4 + \\
 \; + 87x^{7 / 2} + 155x^3 + 49x^{5 / 2} + 52x^2 + \\
 \; + 19x^{3 / 2} + 21x + 5\sqrt x + 5 \\
 \end{array}} \right)
\begin{cases}
 { > 0,} & {0 < x < 1} \\
 { < 0,} & {x > 1} \\
\end{cases}\notag
\end{align}
\noindent and
\[
\beta _{L_8 \_L_{12} }  = \mathop {\sup }\limits_{x \in (0,\infty )} g_{L_8
\_L_{12} } (x) = \mathop {\lim }\limits_{x \to 1} g_{L_8 \_L_{12} } (x) =
\frac{1}{3}.
\]

\smallskip
\noindent
\textbf{(vii) For }$L_8 (P\vert \vert Q) \leqslant \frac{1}{3}L_{13} (P\vert \vert Q)$\textbf{: }We have
\[
g_{L_8 \_L_{13} } (x)  = \frac{{f}''_{L_8 } (x)}{{f}''_{L_{13} } (x)} =
\frac{8\sqrt x \left( {2x^2 + x^{3 / 2} + \sqrt x + 2} \right)}{3\left(
{\sqrt x + 1} \right)^2\left( {5x^2 + 2x + 5} \right)},
\]
\begin{align}
{g}'_{L_8 \_L_{13} } (x) & = - \frac{8\left( {\sqrt x - 1} \right)}{3\sqrt x \left( {\sqrt x + 1} \right)^3\left( {5x^2 +
2x + 5} \right)^2}\times \notag\\
& \hspace{20pt} \times \left(
{\begin{array}{l}
 5x^4 + 5x^{7 / 2} + 3x^3 + 7x^{5 / 2} + \\
 + 20x^2 + 7x^{3 / 2} + 3x + 5\sqrt x + 5 \\
 \end{array}} \right)
\begin{cases}
 { > 0,} & {0 < x < 1} \\
 { < 0,} & {x > 1} \\
\end{cases}\notag
\end{align}
\noindent and
\[
\beta _{L_8 \_L_{13} }  = \mathop {\sup }\limits_{x \in (0,\infty )} g_{L_8
\_L_{13} } (x) = \mathop {\lim }\limits_{x \to 1} g_{L_8 \_L_{13} } (x) =
\frac{1}{3}.
\]

\smallskip
\noindent
\textbf{(viii) For }$L_{12} (P\vert \vert Q) \leqslant \frac{3}{2}L_{11} (P\vert \vert Q)$\textbf{: }We have
\[
g_{L_{12} \_L_{11} } (x) = \frac{{f}''_{L_{12} } (x)}{{f}''_{L_{11} } (x)} =
\frac{\left( {\sqrt x + 1} \right)^2\left( {\begin{array}{l}
 2\left( {\sqrt x - 1} \right)^2\left( {x^4 + 5x^3 + 12x^2 + 5x + 1} \right)
+ \\
 + \left( {x + 1} \right)\left( {13x^4 + 38x^3 + 42x^2 + 38x + 13} \right)
\\
 \end{array}} \right)}{\left( {x + 1} \right)^3\left( {\begin{array}{l}
 15x^3 + 14x^{5 / 2} + 13x^2 + \\
 + 12x^{3 / 2} + 13x + 14\sqrt x + 15 \\
 \end{array}} \right)},
 \]
 \begin{align}
{g}'_{L_{12} \_L_{11} } (x) & = - \frac{6\left( {x - 1} \right)}{\sqrt x
\left( {x + 1} \right)^4\left( {\begin{array}{l}
 15x^3 + 14x^{5 / 2} + 13x^2 + \\
 + 12x^{3 / 2} + 13x + 14\sqrt x + 15 \\
 \end{array}} \right)^2}\times \notag\\
& \hspace{20pt} \times \left( {\begin{array}{l}
 15x^8 + 30x^{15 / 2} + 121x^7 + 174x^{13 / 2} + 346x^6 + \\
 + 178x^{11 / 2} + 487x^5 + 258x^{9 / 2} + 622x^4 + \\
 + 258x^{7 / 2} + 487x^3 + 178x^{5 / 2} + 346x^2 + \\
 + 174x^{3 / 2} + 121x + 30\sqrt x + 15 \\
 \end{array}} \right)\begin{cases}
 { > 0,} & {0 < x < 1} \\
 { < 0,} & {x > 1} \\
\end{cases} \notag
\end{align}
\noindent and
\[
\beta _{L_{12} \_L_{11} }  = \mathop {\sup }\limits_{x \in (0,\infty )}
g_{L_{12} \_L_{11} } (x) = \mathop {\lim }\limits_{x \to 1} g_{L_{12}
\_L_{11} } (x) = \frac{3}{2}.
\]

\smallskip
\noindent
\textbf{(ix) For }$L_{13} (P\vert \vert Q) \leqslant \frac{3}{2}L_{11} (P\vert \vert Q)$\textbf{: }We have
\[
g_{L_{13} \_L_{11} } (x)  = \frac{{f}''_{L_{13} } (x)}{{f}''_{L_{11} } (x)} =
\frac{3\left( {\sqrt x + 1} \right)^2\left( {5x^2 + 2x + 5} \right)}{\left(
{\begin{array}{l}
 15x^3 + 14x^{5 / 2} + 13x^2 + \\
 + 12x^{3 / 2} + 13x + 14\sqrt x + 15 \\
 \end{array}} \right)},
 \]
\[
{g}'_{L_{13} \_L_{11} } (x) = - \frac{24\left( {x - 1} \right)\left(
{\begin{array}{l}
 5x^4 + 5x^{7 / 2} + 3x^3 + 7x^{5 / 2} + \\
 + 20x^2 + 7x^{3 / 2} + 3x + 5\sqrt x + 5 \\
 \end{array}} \right)}{\sqrt x \left( {\begin{array}{l}
 15x^3 + 14x^{5 / 2} + 13x^2 + \\
 + 12x^{3 / 2} + 13x + 14\sqrt x + 15 \\
 \end{array}} \right)^2}
\begin{cases}
 { > 0,} & {0 < x < 1} \\
 { < 0,} & {x > 1} \\
\end{cases}
\]
\noindent and
\[
\beta _{L_{13} \_L_{11} }  = \mathop {\sup }\limits_{x \in (0,\infty )}
g_{L_{13} \_L_{11} } (x) = \mathop {\lim }\limits_{x \to 1} g_{L_{13}
\_L_{11} } (x) = \frac{3}{2}.
\]

\smallskip
\noindent
\textbf{(x) For }$L_8 (P\vert \vert Q) \leqslant \frac{1}{3}L_{14} (P\vert \vert Q)$\textbf{: }We have
\[
g_{L_8 \_L_{14} } (x)  = \frac{{f}''_{L_8 } (x)}{{f}''_{L_{14} } (x)} =
\frac{8\sqrt x \left( {2x^2 + x^{3 / 2} + \sqrt x + 2} \right)\left( {x + 1}
\right)^3}{\left( {\begin{array}{l}
 15x^6 + 30x^{11 / 2} + 72x^5 + 114x^{9 / 2} + \\
 + 137x^4 + 160x^{7 / 2} + 96x^3 + 160x^{5 / 2} + \\
 + 137x^2 + 114x^{3 / 2} + 72x + 30\sqrt x + 15 \\
 \end{array}} \right)},
 \]
 \begin{align}
{g}'_{L_8 \_L_{14} } (x) & = - \frac{24(x - 1)(x + 1)^2}{\sqrt x \left(
{\begin{array}{l}
 15x^6 + 30x^{11 / 2} + 72x^5 + 114x^{9 / 2} + \\
 + 137x^4 + 160x^{7 / 2} + 96x^3 + 160x^{5 / 2} + \\
 + 137x^2 + 114x^{3 / 2} + 72x + 30\sqrt x + 15 \\
 \end{array}} \right)^2}\times \notag\\
& \hspace{20pt} \times \left( {\begin{array}{l}
 5x^8 + 5x^{15 / 2} + 21x^7 + 19x^{13 / 2} + 60x^6 + \\
 + 73x^{11 / 2} + 343x^5 + 159x^{9 / 2} + 270x^4 + \\
 + 159x^{7 / 2} + 243x^3 + 73x^{5 / 2} + 60x^2 + \\
 + 19x^{3 / 2} + 21x + 5\sqrt x + 5 \\
 \end{array}} \right)\begin{cases}
 { > 0,} & {0 < x < 1} \\
 { < 0,} & {x > 1} \\
\end{cases}\notag
\end{align}
\noindent and
\[
\beta _{L_8 \_L_{14} } = \mathop {\sup }\limits_{x \in (0,\infty )} g_{L_8
\_L_{14} } (x) = \mathop {\lim }\limits_{x \to 1} g_{L_8 \_L_{14} } (x) =
\frac{1}{3}.
\]

\smallskip
\noindent
\textbf{(xi) For }$L_8 (P\vert \vert Q) \leqslant \frac{1}{3}L_{15} (P\vert \vert Q)$\textbf{: }We have
\[
g_{L_8 \_L_{15} } (x)  = \frac{{f}''_{L_8 } (x)}{{f}''_{L_{15} } (x)} =
\frac{8\sqrt x \left( {2x^2 + x^{3 / 2} + \sqrt x + 2} \right)\left( {x + 1}
\right)^3}{\left( {\begin{array}{l}
 15x^6 + 30x^{11 / 2} + 78x^5 + 126x^{9 / 2} + \\
 + 145x^4 + 164x^{7 / 2} + 36x^3 + 164x^{5 / 2} + \\
 + 145x^2 + 126x^{3 / 2} + 78x + 30\sqrt x + 15 \\
 \end{array}} \right)},
 \]
 \begin{align}
{g}'_{L_8 \_L_{15} } (x) & = - \frac{24(x - 1)(x + 1)^2}{\sqrt x \left(
{\begin{array}{l}
 15x^6 + 30x^{11 / 2} + 78x^5 + 126x^{9 / 2} + \\
 + 145x^4 + 164x^{7 / 2} + 36x^3 + 164x^{5 / 2} + \\
 + 145x^2 + 126x^{3 / 2} + 78x + 30\sqrt x + 15 \\
 \end{array}} \right)^2}\times \notag\\
& \hspace{20pt} \times \left( {\begin{array}{l}
 5x^8 + 5x^{15 / 2} + 19x^7 + 11x^{13 / 2} + 58x^6 + \\
 + 81x^{11 / 2} + 365x^5 + 223x^{9 / 2} + 386x^4 + \\
 + 223x^{7 / 2} + 365x^3 + 81x^{5 / 2} + 58x^2 + \\
 + 11x^{3 / 2} + 19x + 5\sqrt x + 5 \\
 \end{array}} \right)\begin{cases}
 { > 0,} & {0 < x < 1} \\
 { < 0,} & {x > 1} \\
\end{cases}\notag
\end{align}
\noindent and
\[
\beta _{L_8 \_L_{15} }  = \mathop {\sup }\limits_{x \in (0,\infty )} g_{L_8
\_L_{15} } (x) = \mathop {\lim }\limits_{x \to 1} g_{L_8 \_L_{15} } (x) =
\frac{1}{3}.
\]

\smallskip
\noindent
\textbf{(xii) For }$L_9 (P\vert \vert Q) \leqslant \frac{1}{3}L_{14} (P\vert \vert Q)$\textbf{: }We hav
\[
g_{L_9 \_L_{14} } (x)  = \frac{{f}''_{L_9 } (x)}{{f}''_{L_{14} } (x)} =
\frac{2\sqrt x \left( {\begin{array}{l}
 8x^5 + 7x^{9 / 2} + 30x^4 + 20x^{7 / 2} + 31x^3 + 31x^2 + \\
 + 3x^2\left( {\sqrt x - 1} \right)^2 + 20x^{3 / 2} + 30x + 7\sqrt x + 8 \\
 \end{array}} \right)}{\left( {\begin{array}{l}
 15x^6 + 30x^{11 / 2} + 72x^5 + 114x^{9 / 2} + \\
 + 137x^4 + 160x^{7 / 2} + 96x^3 + 160x^{5 / 2} + \\
 + 137x^2 + 114x^{3 / 2} + 72x + 30\sqrt x + 15 \\
 \end{array}} \right)},
 \]
 \begin{align}
{g}'_{L_9 \_L_{14} } (x) & = - \frac{6\left( {x - 1} \right)\left( {\sqrt x -
1} \right)^2\left( {x + 1} \right)^2}{\sqrt x \left( {\begin{array}{l}
 15x^6 + 30x^{11 / 2} + 72x^5 + 114x^{9 / 2} + \\
 + 137x^4 + 160x^{7 / 2} + 96x^3 + 160x^{5 / 2} + \\
 + 137x^2 + 114x^{3 / 2} + 72x + 30\sqrt x + 15 \\
 \end{array}} \right)^2}\times \notag\\
& \hspace{20pt} \times \left( {\begin{array}{l}
 20x^7 + 75x^{13 / 2} + 274x^6 + 634x^{11 / 2} + \\
 + 1274x^5 + 1685x^{9 / 2} + 2144x^4 + \\
 + 2124x^{7 / 2} + 2144x^3 + 1685\ast x^{5 / 2} + \\
 + 1274x^2 + 634x^{3 / 2} + 274x + 75\sqrt x + 20 \\
 \end{array}} \right)\left\{ {{\begin{array}{*{20}c}
 { > 0,} & {0 < x < 1} \\
 { < 0,} & {x > 1} \\
\end{array} }} \right.\notag
\end{align}
\noindent and
\[
\beta _{L_9 \_L_{14} }  = \mathop {\sup }\limits_{x \in (0,\infty )} g_{L_9
\_L_{14} } (x) = \mathop {\lim }\limits_{x \to 1} g_{L_9 \_L_{14} } (x) =
\frac{1}{3}.
\]

\smallskip
\noindent
\textbf{(xiii) For }$L_9 (P\vert \vert Q) \leqslant \frac{1}{3}L_{15} (P\vert \vert
Q)$\textbf{: }We have
\[
g_{L_9 \_L_{15} } (x) = \frac{{f}''_{L_9 } (x)}{{f}''_{L_{15} } (x)} =
\frac{2\sqrt x \left( {\begin{array}{l}
 8x^5 + 7x^{9 / 2} + 30x^4 + 20x^{7 / 2} + 31x^3 + 31x^2 + \\
 + 3x^2\left( {\sqrt x - 1} \right)^2 + 20x^{3 / 2} + 30x + 7\sqrt x + 8 \\
 \end{array}} \right)}{\left( {\begin{array}{l}
 15x^6 + 30x^{11 / 2} + 78x^5 + 126x^{9 / 2} + \\
 + 145x^4 + 164x^{7 / 2} + 36x^3 + 164x^{5 / 2} + \\
 + 145x^2 + 126x^{3 / 2} + 78x + 30\sqrt x + 15 \\
 \end{array}} \right)},
 \]
 \begin{align}
{g}'_{L_9 \_L_{15} } (x) & = - \frac{6\left( {x - 1} \right)\left( {\sqrt x -
1} \right)^2\left( {x + 1} \right)^2}{\sqrt x \left( {\begin{array}{l}
 15x^6 + 30x^{11 / 2} + 78x^5 + 126x^{9 / 2} + \\
 + 145x^4 + 164x^{7 / 2} + 36x^3 + 164x^{5 / 2} + \\
 + 145x^2 + 126x^{3 / 2} + 78x + 30\sqrt x + 15 \\
 \end{array}} \right)^2}\times \notag\\
& \hspace{20pt} \times \left( {\begin{array}{l}
 20x^8 + 35x^{15 / 2} + 136x^7 + 129x^{13 / 2} + \\
 + 272x^6 - 197x^{11 / 2} + 536x^5 - 223x^{9 / 2} + \\
 + 504x^4 - 223x^{7 / 2} + 536x^3 - 197x^{5 / 2} + \\
 + 272x^2 + 129x^{3 / 2} + 136x + 35\sqrt x + 20 \\
 \end{array}} \right)\begin{cases}
 { > 0,} & {0 < x < 1} \\
 { < 0,} & {x > 1} \\
\end{cases}\notag
\end{align}
\noindent and
\[
\beta _{L_9 \_L_{15} }  = \mathop {\sup }\limits_{x \in (0,\infty )} g_{L_9
\_L_{15} } (x) = \mathop {\lim }\limits_{x \to 1} g_{L_9 \_L_{15} } (x) =
\frac{1}{3}.
\]

\smallskip
\noindent
\textbf{(xiv) For }$L_{14} (P\vert \vert Q) \leqslant \frac{3}{2}L_{11} (P\vert \vert Q)$\textbf{: }We have
\[
g_{L_{14} \_L_{11} } (x)  = \frac{{f}''_{L_{14} } (x)}{{f}''_{L_{11} } (x)} =
\frac{2\sqrt x \left( {\begin{array}{l}
 15x^6 + 30x^{11 / 2} + 72x^5 + 114x^{9 / 2} + \\
 + 137x^4 + 160x^{7 / 2} + 96x^3 + 160x^{5 / 2} + \\
 + 137x^2 + 114x^{3 / 2} + 72x + 30\sqrt x + 15 \\
 \end{array}} \right)}{\left( {x + 1} \right)^3\left( {\begin{array}{l}
 15x^3 + 14x^{5 / 2} + 13x^2 + \\
 + 12x^{3 / 2} + 13x + 14\sqrt x + 15 \\
 \end{array}} \right)},
 \]
 \begin{align}
{g}'_{L_{14} \_L_{11} } (x) & = - \frac{6\left( {x - 1} \right)\left( {\sqrt x
- 1} \right)^2}{\sqrt x \left( {x + 1} \right)^4\left( {\begin{array}{l}
 15x^3 + 14x^{5 / 2} + 13x^2 + \\
 + 12x^{3 / 2} + 13x + 14\sqrt x + 15 \\
 \end{array}} \right)^2}\times \notag\\
& \hspace{20pt} \times \left( {\begin{array}{l}
 20x^7 + 75x^{13 / 2} + 274x^6 + 634x^{11 / 2} + \\
 + 1274x^5 + 1685x^{9 / 2} + 2144x^4 + \\
 + 2124x^{7 / 2} + 2144x^3 + 1685x^{5 / 2} + \\
 + 1274x^2 + 634x^{3 / 2} + 274x + 75\sqrt x + 20 \\
 \end{array}} \right)\begin{cases}
 { > 0,} & {0 < x < 1} \\
 { < 0,} & {x > 1} \\
\end{cases}\notag
\end{align}
\noindent and
\[
\beta _{L_{14} \_L_{11} }  = \mathop {\sup }\limits_{x \in (0,\infty )}
g_{L_{14} \_L_{11} } (x) = \mathop {\lim }\limits_{x \to 1} g_{L_{14}
\_L_{11} } (x) = \frac{3}{2}.
\]

Combining the parts (iii)-(xiv) we get the proof of (b). Finally, the parts (i)-(xiv) completes the proof of the theorem.
\end{proof}

\subsection{Relationships with the Terms of Exponential Divergence Series}

In this section we shall relate the first four terms of the series (\ref{eq17}).

\begin{theorem} The following inequalities hold
\begin{equation}
\label{eq31}
L_5 (P\vert \vert Q) \le \frac{1}{2048}K_2 (P\vert \vert Q)
\end{equation}

\noindent and
\begin{equation}
\label{eq32}
L_4 (P\vert \vert Q) \le \frac{1}{32768}K_3 (P\vert \vert Q)
\end{equation}
\end{theorem}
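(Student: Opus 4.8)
The plan is to bypass the second-derivative machinery of Lemma~1.2 and prove both inequalities \emph{termwise}, which for these purely algebraic measures is both legitimate and much cleaner. The key observation is that $L_5$, $L_4$, $K_2$ and $K_3$ are all of Csisz\'{a}r form $\sum_{i=1}^n q_i\,\phi(p_i/q_i)$ with $\phi\ge 0$, each summand being homogeneous of degree one in $(p_i,q_i)$. Setting $x=p_i/q_i$ and factoring out $q_i$ gives the generating functions
\[
f_{L_5}(x)=\frac{1}{64}\,\frac{(x-1)^2(\sqrt x-1)^4}{x(x+1)},\qquad f_{K_2}(x)=\frac{(x-1)^6}{x^{5/2}},
\]
\[
f_{L_4}(x)=\frac{1}{64}\,\frac{(\sqrt x-1)^8}{x(x+1)},\qquad f_{K_3}(x)=\frac{(x-1)^8}{x^{7/2}}.
\]
Since every term is nonnegative, it suffices to establish the pointwise bounds $f_{L_5}(x)\le\frac{1}{2048}f_{K_2}(x)$ and $f_{L_4}(x)\le\frac{1}{32768}f_{K_3}(x)$ for all $x>0$; summing the $q_i$-weighted inequalities then yields (\ref{eq31}) and (\ref{eq32}).

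For (\ref{eq31}) I would first use $(x-1)=(\sqrt x-1)(\sqrt x+1)$ to cancel the common factor $(\sqrt x-1)^6$ shared by the two sides (both vanish at $x=1$). After clearing the positive denominators, the claim $f_{L_5}(x)\le\frac{1}{2048}f_{K_2}(x)$ collapses to the single polynomial inequality $(u^2+1)(u+1)^4\ge 32\,u^3$ in $u=\sqrt x>0$. This is palindromic, so the substitution $v=u+\tfrac1u$ (which satisfies $v\ge 2$ for $u>0$ by AM--GM) is natural: dividing by $u^3$ and regrouping yields exactly $v(v+2)^2\ge 32$, using $\tfrac{u^2+1}{u}=v$ and $\tfrac{(u+1)^4}{u^2}=(v+2)^2$. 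Since $v\mapsto v(v+2)^2$ is strictly increasing on $[2,\infty)$ and equals $32$ at $v=2$, the inequality holds, with equality iff $v=2$, i.e. $x=1$.

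The second inequality is handled identically. Cancelling $(\sqrt x-1)^8$ and clearing denominators reduces $f_{L_4}(x)\le\frac{1}{32768}f_{K_3}(x)$ to $(u^2+1)(u+1)^8\ge 512\,u^5$, and the same palindromic substitution turns this into $v(v+2)^4\ge 512$, via $\tfrac{(u+1)^8}{u^4}=(v+2)^4$. As $v\mapsto v(v+2)^4$ is strictly increasing on $[2,\infty)$ with value $512$ at $v=2$, the bound follows. Summing both families of termwise inequalities over $i$ completes the proof.

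The only genuine work is spotting the factorisations $(u^2+1)(u+1)^4/u^3=v(v+2)^2$ and $(u^2+1)(u+1)^8/u^5=v(v+2)^4$; once these are in hand the estimates are immediate and the extremal constants $\tfrac{1}{2048},\tfrac{1}{32768}$ are seen to be attained in the limit $x\to 1$. I expect the main obstacle, were one to instead follow the Lemma~1.2 route used elsewhere in the paper, to be the differentiation: forming $g(x)={f}''_{L_5}(x)/{f}''_{K_2}(x)$, checking ${f}''_{K_2}>0$, and verifying that $g$ increases on $(0,1)$ and decreases on $(1,\infty)$ so that $\sup g=\lim_{x\to1}g(x)=\frac{1}{2048}$ --- all of which involves heavy fractional-power algebra that the termwise argument sidesteps entirely.
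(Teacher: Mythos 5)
Your proof is correct, and it takes a genuinely different route from the paper's. The paper disposes of both inequalities with the same Lemma 1.2 template it uses everywhere else (its proof opens by saying it will repeat the arguments of Theorem 3.1): it forms the ratios of \emph{second} derivatives ${f}''_{L_5}(x)/{f}''_{K_2}(x)$ and ${f}''_{L_4}(x)/{f}''_{K_3}(x)$, shows by explicit fractional-power differentiation that each ratio increases on $(0,1)$ and decreases on $(1,\infty)$, and reads off the suprema $\tfrac{1}{2048}$ and $\tfrac{1}{32768}$ as limits at $x \to 1$ --- exactly the heavy computation you predicted and sidestepped. You instead compare the generating functions themselves: because all four measures are sums $\sum_{i} q_i f(p_i/q_i)$ with $f \geq 0$, the pointwise bounds $f_{L_5}(x) \leq \tfrac{1}{2048}f_{K_2}(x)$ and $f_{L_4}(x) \leq \tfrac{1}{32768}f_{K_3}(x)$ sum directly to the two asserted inequalities, with no convexity, normalization, or differentiation needed. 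Your algebra checks out: the four generating functions are as you state; cancelling $(\sqrt x - 1)^6$, respectively $(\sqrt x - 1)^8$, and clearing the positive denominators reduces the claims to $(u^2+1)(u+1)^4 \geq 32u^3$ and $(u^2+1)(u+1)^8 \geq 512u^5$ with $u = \sqrt x$; and since $(u^2+1)/u = v$ and $(u+1)^2/u = v+2$ where $v = u + 1/u \geq 2$, these become $v(v+2)^2 \geq 32$ and $v(v+2)^4 \geq 512$, which hold because both maps are strictly increasing on $[2,\infty)$ (derivatives $(v+2)(3v+2)$ and $(v+2)^3(5v+2)$) and equal $32$, resp.\ $512$, at $v = 2$. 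Your route buys elementarity, the equality case $x = 1$ (i.e.\ $P = Q$), and sharpness of both constants, since the ratios $f_{L_5}/f_{K_2}$ and $f_{L_4}/f_{K_3}$ tend to exactly those constants as $x \to 1$; the paper's route buys uniformity, since the single Lemma 1.2 mechanism handles every part of Theorems 3.1 and 4.1 as well, so no case-specific factorization ever needs to be discovered.
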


\begin{proof} We shall use the same arguments as of Theorem 3.1 to prove this theorem.

\bigskip
\noindent
\textbf{(i) For }$L_5 (P\vert \vert Q) \le \frac{1}{2048}K_2 (P\vert \vert Q)$\textbf{: }We have
\[
g_{L_5 \_D_{K_2 } } (x)  = \frac{{f}''_{L_5 } (x)}{{f}''_{D_{K_2 } } (x)} =
\frac{x^{3 / 2}\left( {\begin{array}{l}
 2x^4 + 5x^{7 / 2} + 14x^3 + 23x^{5 / 2} + \\
 + 32x^2 + 23x^{3 / 2} + 14x + 5\sqrt x + 2 \\
 \end{array}} \right)}{80\left( {x + 1} \right)^3\left( {\sqrt x + 1}
\right)^4\left( {7x^2 + 10x + 7} \right)},
\]
\begin{align}
{g}'_{L_5 \_D_{K_2 } } (x) & = - \frac{3\sqrt x \left( {\sqrt x - 1}
\right)}{80\left( {\sqrt x + 1} \right)^5\left( {x + 1} \right)^4\left(
{7x^2 + 10x + 7} \right)^2}\times \notag\\
& \hspace{20pt} \times \left( {\begin{array}{l}
 7x^7 + 28x^{13 / 2} + 106x^6 + 262x^{11 / 2} + \\
 + 560x^5 + 860x^{9 / 2} + 1151x^4 + \\
 + 1220x^{7 / 2} + 1151x^3 + 860x^{5 / 2} + \\
 + 560x^2 + 262x^{3 / 2} + 106x28\sqrt x + 7 \\
 \end{array}} \right)\begin{cases}
 { > 0,} & {0 < x < 1} \\
 { < 0,} & {x > 1} \\
\end{cases}\notag
\end{align}
\noindent and
\[
\beta _{L_5 \_D_{K_2 } }  = \mathop {\sup }\limits_{x \in (0,\infty )} g_{L_5
\_D_{K_2 } } (x) = \mathop {\lim }\limits_{x \to 1} g_{L_5 \_D_{K_2 } } (x)
= \frac{1}{2048}.
\]

\noindent
\textbf{(ii) For }$L_4 (P\vert \vert Q) \le \frac{1}{32768}K_3 (P\vert \vert Q)$\textbf{: }We have
\[
g_{L_4 \_D_{K_3 } } (x)  = \frac{{f}''_{L_4 } (x)}{{f}''_{D_{K_3 } } (x)} =
\frac{x^{5 / 2}\left( {x^2 + x^{3 / 2} + 3x + \sqrt x + 1} \right)}{56\left(
{\sqrt x + 1} \right)^4\left( {x + 1} \right)^3(9x^2 + 14x + 9)},
\]
\begin{align}
{g}'_{L_4 \_D_{K_3 } } (x) & = - \frac{3x^{3 / 2}\left( {\sqrt x - 1}
\right)}{112\left( {\sqrt x + 1} \right)^5\left( {x + 1} \right)^4\left(
{9x^2 + 14x + 9} \right)^2}\times\notag\\
& \hspace{20pt} \times \left( {\begin{array}{l}
 15x^5 + 36x^{9 / 2} + 116x^4 + 166x^{7 / 2} \\
 + 261x^3 + 252x^{5 / 2} + 261x^2 + \\
 + 166x^{3 / 2} + 116x + 36\sqrt x + 15 \\
 \end{array}} \right)\begin{cases}
 { > 0,} & {0 < x < 1} \\
 { < 0,} & {x > 1} \\
\end{cases}\notag
\end{align}
\noindent and
\[
\beta _{L_4 \_D_{K_3 } } = \mathop {\sup }\limits_{x \in (0,\infty )} g_{L_4
\_D_{K_3 } } (x) = \mathop {\lim }\limits_{x \to 1} g_{L_4 \_D_{K_3 } } (x)
= \frac{1}{32768}.
\]
\end{proof}

\begin{remark} After simplifications, we have following relations with the first four terms of the exponential divergence series:
\begin{equation}
\label{eq33}
\frac{1}{2}F(P\vert \vert Q)= K_0 (P\vert \vert Q) + \frac{1}{4}K_1 (P\vert \vert Q),
\end{equation}
\begin{align}
\label{eq34}
\frac{1}{4}\Psi (P\vert \vert Q) + \Delta (P\vert \vert Q) & \leqslant K_0
(P\vert \vert Q) + \frac{1}{128}K_2 (P\vert \vert Q) \notag\\
& \leqslant 8T(P\vert \vert Q) + \frac{3}{256}K_2 (P\vert \vert Q)
\end{align}
\noindent and
\begin{align}
\frac{1}{2}\Psi (P\vert \vert Q) + 32h(P\vert \vert Q)  & \leqslant 2\Delta
(P\vert \vert Q) + 4K_0 (P\vert \vert Q) + \frac{1}{1024}K_3 (P\vert \vert
Q)\notag\\
\label{eq35}
& \leqslant 5K_0 (P\vert \vert Q) + \frac{1}{1024}K_3 (P\vert \vert Q).
\end{align}

The expression (\ref{eq33}) relates the measures $K_0 (P\vert \vert Q)$ and $K_1 (P\vert \vert Q)$. The expression (\ref{eq34}) relates $K_0 (P\vert \vert Q)$ and $K_2 (P\vert \vert Q)$ and the expression (\ref{eq35}) relates $K_0 (P\vert \vert Q)$ and $K_3 (P\vert \vert Q)$ with the other known measures.
\end{remark}

\end{document}